\newcommand{\squeezeup}{\vspace{-9.5mm}}
\title{Shifted nonlocal reductions of $5$-component Maccari system}
\author{Sena Bayl{\i} \thanks{sbayli@bartin.edu.tr}\\
{\small Department of Mathematics, Faculty of Science}\\
{\small Bart{\i}n University, 74100 Bart{\i}n - Turkiye}\\
Asl{\i} Pekcan \thanks{aslipekcan@hacettepe.edu.tr} \\
{\small Department of Mathematics, Faculty of Science} \\
{\small Hacettepe University, 06800 Ankara - Turkiye}
}
\date{\nonumber}
\begin{document}
\maketitle
\date{\nonumber}
\newtheorem{thm}{Theorem}[section]
\newtheorem{proof}{Proof}[section]
\newtheorem{remark}{Remark}[section]
\newtheorem{Le}{Lemma}[section]
\newtheorem{defi}{Definition}[section]
\newtheorem{ex}{Example}[section]
\newtheorem{pro}{Proposition}[section]
\baselineskip 17pt

\numberwithin{equation}{section}

\begin{abstract}
In this work, we prove that shifted nonlocal reductions of integrable $(2+1)$-dimensional $5$-component Maccari system are particular cases of shifted scale transformations. We present all shifted nonlocal reductions of this system and obtain new two-place and four-place integrable systems and equations. In addition to that we use the Hirota direct method and obtain one-soliton solution of the $5$-component Maccari system. By using the reduction formulas with the solution of the Maccari system we also
derive soliton solutions of the shifted nonlocal reduced Maccari systems and equations. We give some particular examples of solutions with their graphs.
\end{abstract}

\noindent \textbf{Keywords.} $5$-component Maccari system, Shifted nonlocal reductions, Hirota method, Soliton solutions

\tableofcontents

\section{Introduction}

Recent studies have placed significant emphasis on deriving new integrable standard (unshifted) nonlocal and shifted nonlocal nonlinear partial differential equations after the works of Ablowitz and Musslimani \cite{abl1}-\cite{AbMu5}. One of the powerful aspects of the nonlocal reductions is that if a nonlocal reduction is done consistently the reduced equation preserves integrability. Nonlocal unshifted and shifted reductions of many famous integrable systems like nonlinear Schr\"{o}dinger (NLS), Korteweg-de Vries (KdV), modified Korteweg-de Vries (MKdV) had been derived. Their soliton solutions were obtained by various methods like inverse scattering transform, Hirota method, Darboux transformation etc. \cite{gur1}-\cite{shiftedMKdV22022}.

 The general $(2+1)$-dimensional $(N+1)$-component Maccari system \cite{Maccari1997} is given by
\begin{align}
&iu_{k,t}+u_{k,xx}+pu_k=0, \quad k=1,2,\ldots,N,\label{eqn1}\\
&p_y=\sum_{k=1}^N\sigma_k (u_k\bar{u}_k)_x,\label{eqn2}
\end{align}
where $u_k=u_k(x,y,t)$, $\sigma_k=\pm 1$ for $k=1,2,\ldots,N$, and $p=p(x,y,t)$. Here, the bar notation is used for the complex conjugation, and the functions $u_k$ denote $N$ different short-wave amplitudes, while the function $p$ represents the long-wave amplitude. The system (\ref{eqn1})-(\ref{eqn2}) models nonlinear waves occurring in various physical environments, e.g., rogue waves, isolated waves localized in a very small part of space etc. \cite{Maccari2020}. It has applications in plasma physics, nonlinear optics, hydrodynamics, superconductivity, Bose-Einstein condensates, and so on. There are many works studying the different types of solution of this system, see \cite{Uthayakumar}-\cite{ZMH}.

Taking $N=4$ and $\displaystyle t\longrightarrow \frac{a}{i}t$, we have the $(2+1)$-dimensional $5$-component Maccari system
\begin{align}
&au_t+u_{xx}+pu=0,\label{M1}\\
&av_t+v_{xx}+pv=0,\label{M2}\\
&aw_t+w_{xx}+pw=0,\label{M3}\\
&az_t+z_{xx}+pz=0,\label{M4}\\
&p_y=[\sigma_1 u\bar{u}+\sigma_2 v\bar{v}+\sigma_3 w\bar{w}+\sigma_4 z\bar{z}]_x. \label{M5}
\end{align}
If we take $v=u$, $z=w$, we get $3$-component Maccari system which was first derived by Maccari \cite{Maccari1997} from integrable Nizhnik-Novikov-Veselov equation. In fact, this system can be derived from many nonlinear partial differential equations \cite{Maccari2020}. Uthayakumar et al. \cite{Uthayakumar} used Painlev\'{e} analysis to prove the integrability of the $3$-component Maccari system. Pekcan considered the $3$-component Maccari system, its unshifted nonlocal reductions, and soliton solutions
in \cite{pek2021}. The $3$-component Maccari system can only be reduced to two-place nonlocal equations. For example, Pekcan obtained reverse $y$-space unshifted nonlocal Maccari equation
\begin{equation}
au_t(x,y,t)+u_{xx}(x,y,t)+D_y^{-1}\sigma_1u(x,y,t)\bar{u}(x,y,t)+\sigma_2u(x,-y,t)\bar{u}(x,-y,t))_x u(x,y,t)=0,
\end{equation}
which includes two different places; $(x,y,t)$ and $(x,-y,t)$. In real world problems, there may be more than two events which occur at different places but related. Hence,
to model such problems we need multi-place equations \cite{Loumulti}, \cite{Mamulti}.

In this work, we analyze the $5$-component Maccari system (\ref{M1})-(\ref{M5}). We first show that shifted nonlocal reductions of
this system in the form
\begin{align}
&v(x,y,t)=\rho_1 u(\varepsilon_1x+x_0,\varepsilon_2y+y_0,\varepsilon_3t+t_0),\\
&z(x,y,t)=\rho_2 w(\varepsilon_1x+x_0,\varepsilon_2y+y_0,\varepsilon_3t+t_0),
\end{align}
and
\begin{align}
&v(x,y,t)=\rho_1 \bar{u}(\varepsilon_1x+x_0,\varepsilon_2y+y_0,\varepsilon_3t+t_0),\\
&z(x,y,t)=\rho_2 \bar{w}(\varepsilon_1x+x_0,\varepsilon_2y+y_0,\varepsilon_3t+t_0),
\end{align} for $x_0, y_0, t_0\in\mathbb{R},\, \rho_1^2=\rho_2^2=1$, $\varepsilon_j^2=1$, $j=1, 2, 3$, are special type of shifted scale transformations \cite{GPZ}. We then obtain all consistent shifted nonlocal reductions of $5$-component Maccari system. Under these reductions we derive two-place and also four-place shifted nonlocal equations. We use the Hirota bilinear method \cite{Hirota2}, \cite{Hietarinta} to obtain one-soliton solution of the $5$-component Maccari system.
Then using solution of this system with the shifted nonlocal reduction formulas we derive solutions of the reduced shifted nonlocal equations. Here we have two approaches called Type 1 and Type 2. Let $r(x,t)=\frac{U_1(x,t)}{U_2(x,t)}$ and $q(x,t)=\frac{V_1(x,t)}{V_2(x,t)}$. For instance, consider the reduction formula $r(x,t)=\rho q(\varepsilon_1x+x_0,\varepsilon_2t+t_0)$ where $\varepsilon_1^2=\varepsilon_2^2=1$, $\rho, x_0, t_0 \in \mathbb{R}$. We have
\begin{equation}\label{types}
\frac{U_1(x,t)}{U_2(x,t)}=\rho \frac{V_1(\varepsilon_1x+x_0,\varepsilon_2t+t_0)}{V_2(\varepsilon_1x+x_0,\varepsilon_2t+t_0)}.
\end{equation}
Type 1 is based on equating the numerators and denominators of $r(x,t)$ and $q(x,t)$ separately, i.e., from \eqref{types} we have
two equations
\begin{equation}\label{type1}
U_1(x,t)=\rho V_1(\varepsilon_1x+x_0,\varepsilon_2t+t_0),\quad U_2(x,t)=V_2(\varepsilon_1x+x_0,\varepsilon_2t+t_0)
\end{equation}
to be satisfied. Type 2 is the approach based on cross multiplication giving
\begin{equation}\label{type2}
U_1(x,t)V_2(\varepsilon_1x+x_0,\varepsilon_2t+t_0)=\rho U_2(x,t)V_1(\varepsilon_1x+x_0,\varepsilon_2t+t_0).
\end{equation}
The equations \eqref{type1} and \eqref{type2} give constraints on the parameters of the solutions.

The paper is organized as follows. In Section 2, we prove that real and complex shifted nonlocal reductions of $5$-component Maccari system are special type of shifted scale transformations. We give the reduced shifted nonlocal Maccari systems. In Section 3, we apply the shifted nonlocal reductions to the reduced nonlocal Maccari systems and obtain new integrable shifted nonlocal  Maccari equations. In Section 4, we apply the Hirota bilinear method to the $5$-component Maccari system to obtain one-soliton solution of this system. We use one-soliton solution of the $5$-component Maccari system with the reduction formulas and derive soliton solutions of the reduced shifted nonlocal Maccari systems in Section 5. By using these solutions we further derive soliton solutions of the reduced shifted nonlocal Maccari equations in Section 6.

\section{Shifted nonlocal reductions}

In this section, inspired by the work of G\"{u}rses et al. \cite{GPZ} we shall prove that the shifted nonlocal reductions of $5$-component Maccari system are special cases of shifted scale transformations. We also present the reduced shifted nonlocal systems obtained from the $5$-component Maccari system \eqref{M1}-\eqref{M5}. Note that we get the function $p(x,y,t)=D_y^{-1} [\sigma_1 u\bar{u}+\sigma_2 v\bar{v}+\sigma_3 w\bar{w}+\sigma_4 z\bar{z}]_x$ from (\ref{M5}) and insert it into \eqref{M1}-\eqref{M4}. We have
\begin{align}
&au_t+u_{xx}+D_y^{-1} [\sigma_1 u\bar{u}+\sigma_2 v\bar{v}+\sigma_3 w\bar{w}+\sigma_4 z\bar{z}]_xu=0,\label{MM1}\\
&av_t+v_{xx}+D_y^{-1} [\sigma_1 u\bar{u}+\sigma_2 v\bar{v}+\sigma_3 w\bar{w}+\sigma_4 z\bar{z}]_xv=0,\label{MM2}\\
&aw_t+w_{xx}+D_y^{-1} [\sigma_1 u\bar{u}+\sigma_2 v\bar{v}+\sigma_3 w\bar{w}+\sigma_4 z\bar{z}]_xw=0,\label{MM3}\\
&az_t+z_{xx}+D_y^{-1} [\sigma_1 u\bar{u}+\sigma_2 v\bar{v}+\sigma_3 w\bar{w}+\sigma_4 z\bar{z}]_xz=0.\label{MM4}
\end{align}
In the below theorems we will use this form of the $5$-component Maccari system.

\subsection{Real shifted nonlocal reductions for the Maccari system}
\begin{thm} \label{theoremRMac}
For the $5$-component Maccari system \eqref{MM1}-\eqref{MM4}, the real shifted nonlocal reductions are special cases of shifted discrete symmetry transformations, which are particular types of shifted scale transformations.
\end{thm}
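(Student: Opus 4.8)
The plan is to follow the strategy of Gürses, Pekcan, and Zheltukhin for scale transformations and show that every consistent real shifted nonlocal reduction of \eqref{MM1}-\eqref{MM4} can be realized as a composition of a shifted discrete symmetry of the Maccari system with an ordinary (unshifted) scaling. First I would write down the admissible shifted discrete symmetry transformations of the system \eqref{MM1}-\eqref{MM4}: these are maps of the form $x\mapsto\varepsilon_1 x+x_0$, $y\mapsto\varepsilon_2 y+y_0$, $t\mapsto\varepsilon_3 t+t_0$ together with $u\mapsto\rho_1 u$, $w\mapsto\rho_2 w$ (and similarly for $v,z$), where $\varepsilon_j^2=1$, $\rho_i^2=1$, $x_0,y_0,t_0\in\mathbb{R}$. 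The key bookkeeping step is to track how each term transforms: $u_t$ picks up $\varepsilon_3$, $u_{xx}$ picks up $\varepsilon_1^2=1$, the operator $D_y^{-1}$ picks up $\varepsilon_2$, and the bracket $[\sigma_1 u\bar u+\cdots]_x$ picks up $\varepsilon_1=\pm1$ from the $x$-derivative. For the transformed quadruple to solve the same system we need $\varepsilon_3$ to match the combination $\varepsilon_1\varepsilon_2$ coming from the nonlinear term, which singles out the consistent choices of signs; this is exactly the ``discrete symmetry'' condition.

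Next I would impose the reduction ansatz $v(x,y,t)=\rho_1 u(\varepsilon_1 x+x_0,\varepsilon_2 y+y_0,\varepsilon_3 t+t_0)$ and $z(x,y,t)=\rho_2 w(\varepsilon_1 x+x_0,\varepsilon_2 y+y_0,\varepsilon_3 t+t_0)$ directly into \eqref{MM1}-\eqref{MM4}. Equations \eqref{MM2} and \eqref{MM4} must then reduce — after applying the shifted change of variables — to \eqref{MM1} and \eqref{MM3} evaluated at the shifted point. Carrying out this substitution produces a constraint of the form $\sigma_2\rho_1^2=\varepsilon_1\varepsilon_2\sigma_1$ and $\sigma_4\rho_2^2=\varepsilon_1\varepsilon_2\sigma_3$ (up to the relabeling of terms inside $D_y^{-1}[\cdots]_x$), and since $\rho_i^2=1$ these become sign conditions relating $\sigma_1,\sigma_2$ (resp. $\sigma_3,\sigma_4$) to $\varepsilon_1\varepsilon_2$. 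The point to emphasize is that these are precisely the conditions under which the shifted discrete symmetry above is an automorphism of the system, so the reduction is nothing but restricting solutions to the fixed-point locus of that shifted symmetry composed, if necessary, with the trivial scaling $u\mapsto u$. I would then state the resulting consistency conditions explicitly and observe that each admissible choice yields one of the reduced shifted nonlocal Maccari systems to be listed.

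Finally I would package this as: given any real shifted nonlocal reduction, define $\tilde u(x,y,t)=u(\varepsilon_1 x+x_0,\varepsilon_2 y+y_0,\varepsilon_3 t+t_0)$ and similarly $\tilde w$; the reduction asserts $v=\rho_1\tilde u$, $z=\rho_2\tilde w$, which is manifestly a (shifted) discrete-symmetry-and-scale relation, and the computation above shows it is compatible with \eqref{MM1}-\eqref{MM4} exactly when the sign constraints hold — hence every such reduction is a special case of a shifted scale transformation. The main obstacle I anticipate is the careful handling of the nonlocal operator $D_y^{-1}$ under the shift $y\mapsto\varepsilon_2 y+y_0$: one must check that $D_y^{-1}$ applied to a function of $\varepsilon_2 y+y_0$ equals $\varepsilon_2$ times the shifted antiderivative, i.e. that the constant of integration can be chosen consistently (this is where the $\varepsilon_2$ factor enters and where a naive manipulation could drop a sign or smuggle in an inhomogeneous term). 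Everything else is a routine but sign-sensitive substitution, so I would do that part slowly and present the resulting table of constraints, deferring the explicit reduced systems to the statements that follow.
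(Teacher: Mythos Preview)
Your plan is in the right spirit but differs from the paper's argument in one structural respect, and it contains one bookkeeping slip.

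The paper does not start from the reduction ansatz. It first writes down a general \emph{shifted scale transformation} $T_R$ with continuous parameters, and this transformation crucially \emph{permutes} the dependent variables: $x'=\alpha x+x_0$, $y'=\gamma y+y_0$, $t'=\beta t+t_0$ together with $u'=\delta_1 v$, $v'=\delta_2 u$, $w'=\delta_3 z$, $z'=\delta_4 w$. It then checks directly that the primed quadruple satisfies the same Maccari system precisely when $a\alpha^2/\beta=a$, $\gamma\alpha\sigma_2\delta_2^2=\sigma_1$, $\gamma\alpha\sigma_1\delta_1^2=\sigma_2$, and analogously for the $(w,z)$ pair. Only after this is established does one specialize $\alpha=\varepsilon_1$, $\gamma=\varepsilon_2$ (which forces $\beta=1$) and impose the fixed-point condition $u'=u$, $v'=v$, $w'=w$, $z'=z$: since $u'=\delta_1 v$, this yields $v=\rho_1 u(\varepsilon_1 x+x_0,\varepsilon_2 y+y_0,t+t_0)$, i.e.\ the reduction. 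Thus the hierarchy in the theorem statement (scale $\supset$ discrete symmetry $\supset$ reduction) is exhibited explicitly. Your outline instead imposes the reduction first and argues backward that it ``is'' a symmetry; that would eventually give the same constraints, but the swap $u'\propto v$, $v'\propto u$ is what makes the map a symmetry of the full four-component system and what makes the reduction its fixed-point set --- and this swap is absent from your description of the discrete symmetry in the first paragraph, where you write $u\mapsto\rho_1 u$ (a diagonal action).

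The slip: in the real case the constraint on the time parameter comes from the linear part, $a\alpha^2/\beta=a$, which forces $\beta=\alpha^2=1$; there is no time reversal here, so $\varepsilon_3=1$ always, and $\varepsilon_3$ is not determined by $\varepsilon_1\varepsilon_2$. The product $\varepsilon_1\varepsilon_2$ instead enters through the relations $\sigma_2=\varepsilon_1\varepsilon_2\sigma_1$, $\sigma_4=\varepsilon_1\varepsilon_2\sigma_3$ coming from the nonlinear term. Your worry about $D_y^{-1}$ under $y\mapsto\varepsilon_2 y+y_0$ is well placed and is handled exactly as you anticipate: it contributes a factor $\gamma=\varepsilon_2$.
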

\begin{proof}
\noindent Let
 \begin{align}
T_R:&\big(u(x,y,t),v(x,y,t),w(x,y,t),z(x,y,t)\big)\longrightarrow \big(u'(x',y',t'),v'(x',y',t'),w'(x',y',t'),z'(x',y',t')\big) \nonumber\\
& \hspace{3cm} x'=\alpha x+x_0,\quad y'=\gamma y+y_0,\quad t'=\beta t+t_0,\nonumber\\
&\hspace{3cm} u'=\delta_1 v,\quad v'=\delta_2 u,\quad w'=\delta_3z,\quad z'=\delta_4w,\label{T5}
\end{align}
where $\alpha,\beta,\gamma,x_0,y_0,t_0,\delta_i \in \mathbb{R}$ for $i=1,2,3,4 $. Here, the functions $u'(x',y',t')$, $v'(x',y',t')$, $w'(x',y',t')$, $z'(x',y',t')$ also satisfy the $5$-component Maccari system
\begin{align}
&au'_{t'}+u'_{x'x'}+D^{-1}_{y'}[\sigma_1 u'\bar{u}'+\sigma_2 v'\bar{v}'+\sigma_3 w'\bar{w}'+\sigma_4 z'\bar{z}']_{x'}u'=0,\label{M'1}\\
&av'_{t'}+v'_{x'x'}+D^{-1}_{y'}[\sigma_1 u'\bar{u}'+\sigma_2 v'\bar{v}'+\sigma_3 w'\bar{w}'+\sigma_4 z'\bar{z}']_{x'}v'=0,\label{M'2}\\
&aw'_{t'}+w'_{x'x'}+D^{-1}_{y'}[\sigma_1 u'\bar{u}'+\sigma_2 v'\bar{v}'+\sigma_3 w'\bar{w}'+\sigma_4 z'\bar{z}']_{x'}w'=0,\label{M'3}\\
&az'_{t'}+z'_{x'x'}+D^{-1}_{y'}[\sigma_1 u'\bar{u}'+\sigma_2 v'\bar{v}'+\sigma_3 w'\bar{w}'+\sigma_4 z'\bar{z}']_{x'}z'=0.\label{M'4}
\end{align}
\noindent Under the scale transformation \eqref{T5}, we obtain the following equations:
\begin{align}
&\frac{a\delta_1}{\beta} v_t+\frac{\delta_1}{\alpha^2}v_{xx}+\frac{\gamma}{\alpha}D^{-1}_{y}[\sigma_1\delta^2_1v\bar{v}+\sigma_2\delta^2_2u\bar{u}+\sigma_3\delta^2_3z\bar{z}+\sigma_4\delta^2_4w\bar{w}]_x\delta_1v=0, \label{RMac1} \\
&\frac{a\delta_2}{\beta} u_t+\frac{\delta_2}{\alpha^2}u_{xx}+\frac{\gamma}{\alpha}D^{-1}_{y}[\sigma_1\delta^2_1v\bar{v}+\sigma_2\delta^2_2u\bar{u}+\sigma_3\delta^2_3z\bar{z}+\sigma_4\delta^2_4w\bar{w}]_x\delta_2u=0, \label{RMac2} \\
&\frac{a\delta_3}{\beta} z_t+\frac{\delta_3}{\alpha^2}z_{xx}+\frac{\gamma}{\alpha}D^{-1}_{y}[\sigma_1\delta^2_1v\bar{v}+\sigma_2\delta^2_2u\bar{u}+\sigma_3\delta^2_3z\bar{z}+\sigma_4\delta^2_4w\bar{w}]_x\delta_3z=0, \label{RMac3} \\
&\frac{a\delta_4}{\beta} w_t+\frac{\delta_4}{\alpha^2}w_{xx}+\frac{\gamma}{\alpha}D^{-1}_{y}[\sigma_1\delta^2_1v\bar{v}+\sigma_2\delta^2_2u\bar{u}+\sigma_3\delta^2_3z\bar{z}+\sigma_4\delta^2_4w\bar{w}]_x\delta_4w=0. \label{RMac4}
\end{align}
\noindent The above system \eqref{RMac1}-\eqref{RMac4} is invariant if
\begin{equation}
\frac{a\alpha^2}{\beta}=a,\quad \gamma \alpha \sigma_2\delta_2^2=\sigma_1, \quad \gamma\alpha\sigma_1\delta_1^2=\sigma_2, \quad \gamma\alpha\sigma_4\delta_4^2=\sigma_3, \quad \gamma\alpha\sigma_3\delta_3^2=\sigma_4,
\end{equation}
yielding $\displaystyle \delta_1^2 \delta_2^2=\delta_3^2\delta_4^2=\frac{1}{\gamma^2\alpha^2}$. Now, let $\gamma=\varepsilon_2=\pm 1, \alpha=\varepsilon_1=\pm 1,$ giving $\beta=1$. Taking $u'=u$, $v'=v$, $w'=w$, $z'=z$, we get the real shifted nonlocal reductions
\begin{align}
&v(x,y,t)=\rho_1 u(\varepsilon_1x+x_0,\varepsilon_2y+y_0,t+t_0),\label{realred1}\\
&z(x,y,t)=\rho_2 w(\varepsilon_1x+x_0,\varepsilon_2y+y_0,t+t_0),\label{realred2}
\end{align}
for $x_0,y_0,t_0\in\mathbb{R},\, \rho_1=\pm 1, \rho_2=\pm 1$.
\hfill$\Box$
\end{proof}

\noindent Using the reductions (\ref{realred1}) and (\ref{realred2}) on the system \eqref{MM1}-\eqref{MM4} and letting
\begin{align}
\Tilde{x}=\varepsilon_1x+x_0,\quad \Tilde{y}=\varepsilon_2y+y_0,\quad \Tilde{t}=t+t_0,\label{tildetfRMac}
\end{align}
we have $\sigma_2=\varepsilon_1\varepsilon_2\sigma_1$, $\sigma_4=\varepsilon_1\varepsilon_2\sigma_3$, and the following three possibilities:
\begin{align} \label{RMacpos}
    &i)\,\, \varepsilon_1=\varepsilon_2=-1,\, t_0=0,\nonumber\\
    &ii)\,\, \varepsilon_1=-1,\, \varepsilon_2=1,\, y_0=t_0=0,\nonumber\\
    &iii)\,\, \varepsilon_1=1,\, \varepsilon_2=-1,\, x_0=t_0=0,\nonumber\\
    &
\end{align}
for consistency. Therefore, we obtain the three different two-place shifted nonlocal Maccari systems below:

\noindent 1)\, Real reverse $x$-space shifted nonlocal Maccari system
\begin{align}
&au_t(x,y,t)+u_{xx}(x,y,t)+D^{-1}_{y}[\sigma_1u(x,y,t)\bar{u}(x,y,t)-\sigma_1u(-x+x_0,y,t)\bar{u}(-x+x_0,y,t)\nonumber \\  &+\sigma_3w(x,y,t)\bar{w}(x,y,t)-\sigma_3w(-x+x_0,y,t)\bar{w}(-x+x_0,y,t)]_xu(x,y,t)=0, \label{RMacx1}\\
&aw_t(x,y,t)+w_{xx}(x,y,t)+D^{-1}_{y}[\sigma_1u(x,y,t)\bar{u}(x,y,t)-\sigma_1u(-x+x_0,y,t)\bar{u}(-x+x_0,y,t) \nonumber \\ &+\sigma_3w(x,y,t)\bar{w}(x,y,t)-\sigma_3w(-x+x_0,y,t)\bar{w}(-x+x_0,y,t)]_xw(x,y,t)=0. \label{RMacx2}
\end{align}
\noindent 2)\, Real reverse $y$-space shifted nonlocal Maccari system
\begin{align}
&au_t(x,y,t)+u_{xx}(x,y,t)+D^{-1}_{y}[\sigma_1u(x,y,t)\bar{u}(x,y,t)-\sigma_1u(x,-y+y_0,t)\bar{u}(x,-y+y_0,t)\nonumber\\ &+\sigma_3w(x,y,t)\bar{w}(x,y,t)-\sigma_3w(x,-y+y_0,t)\bar{w}(x,-y+y_0,t)]_xu(x,y,t)=0, \label{RMacy1}\\
&aw_t(x,y,t)+w_{xx}(x,y,t)+D^{-1}_{y}[\sigma_1u(x,y,t)\bar{u}(x,y,t)-\sigma_1u(x,-y+y_0,t)\bar{u}(x,-y+y_0,t)\nonumber\\ &+\sigma_3w(x,y,t)\bar{w}(x,y,t)-\sigma_3w(x,-y+y_0,t)\bar{w}(x,-y+y_0,t)]_xw(x,y,t)=0. \label{RMacy2}
\end{align}
 \noindent 3)\, Real reverse $xy$-space shifted nonlocal Maccari system
 \begin{align} 
&au_t(x,y,t)+u_{xx}(x,y,t)+D^{-1}_{y}[\sigma_1u(x,y,t)\bar{u}(x,y,t)+\sigma_1u(-x+x_0,-y+y_0,t)\bar{u}(-x+x_0,-y+y_0,t)
\nonumber\\ &+\sigma_3w(x,y,t)\bar{w}(x,y,t)+\sigma_3w(-x+x_0,-y+y_0,t)\bar{w}(-x+x_0,-y+y_0,t)]_xu(x,y,t)=0,\label{RMacxy1} \\
&aw_t(x,y,t)+w_{xx}(x,y,t)+D^{-1}_{y}[\sigma_1u(x,y,t)\bar{u}(x,y,t)+\sigma_1u(-x+x_0,-y+y_0,t)\bar{u}(-x+x_0,-y+y_0,t)
\nonumber\\&+\sigma_3w(x,y,t)\bar{w}(x,y,t)+\sigma_3w(-x+x_0,-y+y_0,t)\bar{w}(-x+x_0,-y+y_0,t)]_xw(x,y,t)=0. \label{RMacxy2}
\end{align}

\subsection{Complex shifted nonlocal reductions for the Maccari system}
\begin{thm} \label{theoremCMac}
For the $5$-component Maccari system \eqref{MM1}-\eqref{MM4}, the complex shifted nonlocal reductions are special cases of shifted discrete symmetry transformations which are particular forms of scale transformations.
\end{thm}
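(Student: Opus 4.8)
The plan is to run the argument from the proof of Theorem~\ref{theoremRMac} almost verbatim, replacing the real discrete symmetry transformation $T_R$ by a \emph{conjugating} one. First I would introduce
\begin{align*}
T_C:&\big(u(x,y,t),v(x,y,t),w(x,y,t),z(x,y,t)\big)\longrightarrow \big(u'(x',y',t'),v'(x',y',t'),w'(x',y',t'),z'(x',y',t')\big)\\
& x'=\alpha x+x_0,\quad y'=\gamma y+y_0,\quad t'=\beta t+t_0,\\
& u'=\delta_1\bar{v},\quad v'=\delta_2\bar{u},\quad w'=\delta_3\bar{z},\quad z'=\delta_4\bar{w},
\end{align*}
with $\alpha,\beta,\gamma,x_0,y_0,t_0,\delta_i\in\mathbb{R}$, and require the primed fields to solve the primed Maccari system \eqref{M'1}-\eqref{M'4}.

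Then I would substitute $T_C$ into \eqref{M'1}-\eqref{M'4} using $\partial_{t'}=\beta^{-1}\partial_t$, $\partial_{x'}=\alpha^{-1}\partial_x$ and $D^{-1}_{y'}=\gamma D^{-1}_y$ (the shifts $x_0,y_0,t_0$ enter only the arguments, not these operators), observing that each bilinear coupling keeps its shape, e.g.\ $\sigma_1 u'\bar{u}'=\sigma_1\delta_1^2\,v\bar{v}$. Taking the complex conjugate of the four resulting equations, so that they read in $u,v,w,z$ rather than in their conjugates — here $\overline{a\,\overline{v_t}}=\bar{a}\,v_t$, and $D^{-1}_y$, $\partial_x$ commute with conjugation since $x,y$ are real — one is led to impose
\begin{align*}
\frac{\bar{a}\,\alpha^2}{\beta}=a,\qquad \gamma\alpha\,\sigma_2\delta_2^2=\sigma_1,\quad \gamma\alpha\,\sigma_1\delta_1^2=\sigma_2,\quad \gamma\alpha\,\sigma_4\delta_4^2=\sigma_3,\quad \gamma\alpha\,\sigma_3\delta_3^2=\sigma_4,
\end{align*}
which again gives $\delta_1^2\delta_2^2=\delta_3^2\delta_4^2=1/(\gamma^2\alpha^2)$ together with $\sigma_2=\varepsilon_1\varepsilon_2\sigma_1$, $\sigma_4=\varepsilon_1\varepsilon_2\sigma_3$. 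Setting $\gamma=\varepsilon_2=\pm1$ and $\alpha=\varepsilon_1=\pm1$ forces $\beta=\bar{a}/a$; since $\beta$ is real this is a sign $\beta=\varepsilon_3$ with $\varepsilon_3^2=1$, and — unlike Theorem~\ref{theoremRMac}, where $\beta=1$ — the reverse-time value $\varepsilon_3=-1$ occurs whenever $\bar{a}=-a$. Finally, taking $u'=u$, $v'=v$, $w'=w$, $z'=z$ and renaming the leftover signs $\rho_1,\rho_2$ yields the complex shifted nonlocal reductions
\begin{align*}
&v(x,y,t)=\rho_1\bar{u}(\varepsilon_1 x+x_0,\varepsilon_2 y+y_0,\varepsilon_3 t+t_0),\\
&z(x,y,t)=\rho_2\bar{w}(\varepsilon_1 x+x_0,\varepsilon_2 y+y_0,\varepsilon_3 t+t_0),
\end{align*}
with $\rho_1^2=\rho_2^2=1$, which are by construction particular shifted scale transformations in the sense of \cite{GPZ}.

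The step I expect to be the main obstacle is the careful bookkeeping of the conjugations, above all the conjugation of the coefficient $a$ in the $u_t$ term: in Theorem~\ref{theoremRMac} this term is untouched and pins $\beta=1$, whereas here it produces $\bar{a}$, and it is precisely the constraint $\bar{a}\alpha^2/\beta=a$ that links the admissible time reflection $\varepsilon_3$ to $a$; one must also check that the conjugated nonlinear term is again $D^{-1}_y[\sigma_1 u\bar{u}+\sigma_2 v\bar{v}+\sigma_3 w\bar{w}+\sigma_4 z\bar{z}]_x$, i.e.\ the same bracket appearing in all four equations. Once the reduction formulas are established, substituting them back into \eqref{MM1}-\eqref{MM4} with $\tilde{x}=\varepsilon_1 x+x_0$, $\tilde{y}=\varepsilon_2 y+y_0$, $\tilde{t}=\varepsilon_3 t+t_0$ and demanding consistency — exactly as was done right after Theorem~\ref{theoremRMac} — fixes the relations among $\sigma_1,\dots,\sigma_4$ and the admissible shift vectors, and produces the explicit reverse-$x$, reverse-$y$, reverse-$xy$ and reverse-time type complex shifted nonlocal Maccari systems; this last part is routine.
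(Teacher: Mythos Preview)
Your proposal is correct and follows essentially the same route as the paper's proof: introduce the conjugating scale transformation $T_C$ with $u'=\delta_1\bar v,\,v'=\delta_2\bar u,\,w'=\delta_3\bar z,\,z'=\delta_4\bar w$, substitute into the primed system, and read off the invariance conditions $\bar a\alpha^2/\beta=a$ together with $\gamma\alpha\sigma_1\delta_1^2=\sigma_2$, etc., then specialize $\alpha,\gamma,\beta$ to signs and set $u'=u,\dots$ to obtain the complex shifted reductions. The only cosmetic difference is that the paper simply \emph{chooses} $\beta=\varepsilon_3=\pm1$ and records the resulting constraint $a=\bar a\varepsilon_3$, whereas you derive $\beta=\bar a/a$ first and then observe it must be $\pm1$; both are equivalent.
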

\begin{proof}
\noindent Let
\begin{align}
T_C:&\big(\bar{u}(x,y,t),\bar{v}(x,y,t),\bar{w}(x,y,t),\bar{z}(x,y,t)\big)\longrightarrow \big(u'(x',y',t'),v'(x',y',t'),w'(x',y',t'),z'(x',y',t')\big) \nonumber\\
& \hspace{3cm} x'=\alpha x+x_0,\quad y'=\gamma y+y_0,\quad t'=\beta t+t_0,\nonumber\\
&\hspace{3cm} u'=\delta_1 \bar{v},\quad v'=\delta_2 \bar{u},\quad w'=\delta_3\bar{z},\quad z'=\delta_4\bar{w},\label{T6}
\end{align}
\noindent where $\alpha,\beta,\gamma,x_0, y_0, t_0, \delta_i \in \mathbb{R}$ for $i=1,2,3,4$. The functions $u'(x',y',t'), v'(x',y',t'), w'(x',y',t'), z'(x',y',t')$ satisfy the primed system \eqref{M'1}-\eqref{M'4}. Under the scale transformation \eqref{T6}, the system \eqref{M'1}-\eqref{M'4} turns to be \begin{align}
&\frac{\bar{a}\alpha^2}{\beta} v_t+v_{xx}+\alpha\gamma D^{-1}_{y}[\sigma_1\delta^2_1v\bar{v}+\sigma_2\delta^2_2u\bar{u}+\sigma_3\delta^2_3z\bar{z}+\sigma_4\delta^2_4w\bar{w}]_xv=0,\label{CMac1}\\
&\frac{\bar{a}\alpha^2}{\beta} u_t+u_{xx}+\alpha\gamma D^{-1}_{y}[\sigma_1\delta^2_1v\bar{v}+\sigma_2\delta^2_2u\bar{u}+\sigma_3\delta^2_3z\bar{z}+\sigma_4\delta^2_4w\bar{w}]_xu=0,\label{CMac2}\\
&\frac{\bar{a}\alpha^2}{\beta} z_t+z_{xx}+\alpha\gamma D^{-1}_{y}[\sigma_1\delta^2_1v\bar{v}+\sigma_2\delta^2_2u\bar{u}+\sigma_3\delta^2_3z\bar{z}+\sigma_4\delta^2_4w\bar{w}]_xz=0,\label{CMac3}\\
&\frac{\bar{a}\alpha^2}{\beta} w_t+w_{xx}+\alpha\gamma D^{-1}_{y}[\sigma_1\delta^2_1v\bar{v}+\sigma_2\delta^2_2u\bar{u}+\sigma_3\delta^2_3z\bar{z}+\sigma_4\delta^2_4w\bar{w}]_xw=0.\label{CMac4}
\end{align}
\noindent In order to leave the above system invariant we have
\begin{equation}
\frac{\bar{a}\alpha^2}{\beta}=a,\quad \gamma \alpha \sigma_2\delta_2^2=\sigma_1, \quad \gamma\alpha\sigma_1\delta_1^2=\sigma_2, \quad \gamma\alpha\sigma_4\delta_4^2=\sigma_3, \quad \gamma\alpha\sigma_3\delta_3^2=\sigma_4,
\end{equation}
yielding $\displaystyle \delta_1^2 \delta_2^2=\delta_3^2\delta_4^2=\frac{1}{\gamma^2\alpha^2}$. Furthermore, let $\alpha=\varepsilon_1=\pm 1,\,\gamma=\varepsilon_2=\pm 1,\,\beta=\varepsilon_3=\pm 1$. Hence $a=\bar{a}\varepsilon_3$. Taking $u'=u, v'=v, w'=w, z'=z$ we get the complex shifted nonlocal reductions as
\begin{align}
&v(x,y,t)=\rho_1 \bar{u}(\varepsilon_1x+x_0,\varepsilon_2y+y_0,\varepsilon_3t+t_0),\\
&z(x,y,t)=\rho_2 \bar{w}(\varepsilon_1x+x_0,\varepsilon_2y+y_0,\varepsilon_3t+t_0),
\end{align} for $x_0, y_0, t_0\in\mathbb{R},\, \rho_1, \rho_2=\pm 1$.
\hfill$\Box$
\end{proof}

\noindent By following the similar steps as in the case of real shifted nonlocal reductions we get the constraints
\begin{equation}
a=\bar{a}\varepsilon_3,\quad \sigma_1=\varepsilon_1\varepsilon_2\sigma_2,\quad \sigma_3=\varepsilon_1\varepsilon_2\sigma_4
\end{equation}
to reduce the $5$-component Maccari system \eqref{MM1}-\eqref{MM4} to shifted nonlocal Maccari systems, consistently. Hence we have the following possibilities:
\begin{align}
    &i)\, \varepsilon_1=1,\, \varepsilon_2=1,\, \varepsilon_3=-1,\, x_0=y_0=0,\nonumber\\
    &ii)\, \varepsilon_1=1,\, \varepsilon_2=-1,\, \varepsilon_3=1,\, x_0=t_0=0,\nonumber\\
    &iii)\, \varepsilon_1=1,\, \varepsilon_2=-1,\, \varepsilon_3=-1,\, x_0=0,\nonumber\\
    &iv)\, \varepsilon_1=-1,\, \varepsilon_2=1,\, \varepsilon_3=1,\, y_0=t_0=0,\nonumber\\
    &v)\, \varepsilon_1=-1,\, \varepsilon_2=1,\, \varepsilon_3=-1,\, y_0=0,\nonumber\\
    &vi)\, \varepsilon_1=-1,\, \varepsilon_2=-1,\, \varepsilon_3=1,\, t_0=0,\nonumber\\
    &vii)\, \varepsilon_1=-1,\, \varepsilon_2=-1,\, \varepsilon_3=-1.\nonumber\\
    &\label{CMacpos}
\end{align}
Explicitly, we obtain the following seven two-place complex shifted nonlocal Maccari systems:

\noindent 1)\, Complex reverse $x$-space shifted nonlocal Maccari system 
\begin{align}
&au_t(x,y,t)+u_{xx}(x,y,t)+D^{-1}_{y}[\sigma_1u(x,y,t)\bar{u}(x,y,t)-\sigma_1u(-x+x_0,y,t)\bar{u}(-x+x_0,y,t)\nonumber \\  &+\sigma_3w(x,y,t)\bar{w}(x,y,t)-\sigma_3w(-x+x_0,y,t)\bar{w}(-x+x_0,y,t)]_xu(x,y,t)=0, \label{CMacx1}\\
&aw_t(x,y,t)+w_{xx}(x,y,t)+D^{-1}_{y}[\sigma_1u(x,y,t)\bar{u}(x,y,t)-\sigma_1u(-x+x_0,y,t)\bar{u}(-x+x_0,y,t) \nonumber \\ &+\sigma_3w(x,y,t)\bar{w}(x,y,t)-\sigma_3w(-x+x_0,y,t)\bar{w}(-x+x_0,y,t)]_xw(x,y,t)=0, \label{CMacx2}
\end{align}
where $a=\bar{a}$.

\noindent 2)\, Complex reverse $y$-space shifted nonlocal Maccari system 
\begin{align}
&au_t(x,y,t)+u_{xx}(x,y,t)+D^{-1}_{y}[\sigma_1u(x,y,t)\bar{u}(x,y,t)-\sigma_1u(x,-y+y_0,t)\bar{u}(x,-y+y_0,t)\nonumber\\ &+\sigma_3w(x,y,t)\bar{w}(x,y,t)-\sigma_3w(x,-y+y_0,t)\bar{w}(x,-y+y_0,t)]_xu(x,y,t)=0, \label{CMacy1}\\
&aw_t(x,y,t)+w_{xx}(x,y,t)+D^{-1}_{y}[\sigma_1u(x,y,t)\bar{u}(x,y,t)-\sigma_1u(x,-y+y_0,t)\bar{u}(x,-y+y_0,t)\nonumber\\ &+\sigma_3w(x,y,t)\bar{w}(x,y,t)-\sigma_3w(x,-y+y_0,t)\bar{w}(x,-y+y_0,t)]_xw(x,y,t)=0, \label{CMacy2}
\end{align}
where $a=\bar{a}$.

\noindent 3)\, Complex reverse $xy$-space shifted nonlocal Maccari system 
\begin{align}
&au_t(x,y,t)+u_{xx}(x,y,t)+D^{-1}_{y}[\sigma_1u(x,y,t)\bar{u}(x,y,t)+\sigma_1u(-x+x_0,-y+y_0,t)\nonumber\\ &\times\bar{u}(-x+x_0,-y+y_0,t)+\sigma_3w(x,y,t)\bar{w}(x,y,t)+\sigma_3w(-x+x_0,-y+y_0,t)\nonumber\\ &\times\bar{w}(-x+x_0,-y+y_0,t)]_xu(x,y,t)=0,\label{CMacxy1} \\
&aw_t(x,y,t)+w_{xx}(x,y,t)+D^{-1}_{y}[\sigma_1u(x,y,t)\bar{u}(x,y,t)+\sigma_1u(-x+x_0,-y+y_0,t)\nonumber\\ &\times\bar{u}(-x+x_0,-y+y_0,t)+\sigma_3w(x,y,t)\bar{w}(x,y,t)+\sigma_3w(-x+x_0,-y+y_0,t)\nonumber\\ &\times\bar{w}(-x+x_0,-y+y_0,t)]_xw(x,y,t)=0, \label{CMacxy2}
\end{align}
where $a=\bar{a}$.

\noindent 4)\, Complex reverse time shifted nonlocal Maccari system
\begin{align}
&au_t(x,y,t)+u_{xx}(x,y,t)+D^{-1}_{y}[\sigma_1u(x,y,t)\bar{u}(x,y,t)+\sigma_1u(x,y,-t+t_0)\bar{u}(x,y,-t+t_0)\nonumber\\ &+\sigma_3w(x,y,t)\bar{w}(x,y,t)+\sigma_3w(x,y,-t+t_0)\bar{w}(x,y,-t+t_0)]_xu(x,y,t)=0,\label{CMact1} \\
&aw_t(x,y,t)+w_{xx}(x,y,t)+D^{-1}_{y}[\sigma_1u(x,y,t)\bar{u}(x,y,t)+\sigma_1u(x,y,-t+t_0)\bar{u}(x,y,-t+t_0)\nonumber\\ &+\sigma_3w(x,y,t)\bar{w}(x,y,t)+\sigma_3w(x,y,-t+t_0)\bar{w}(x,y,-t+t_0)]_xw(x,y,t)=0, \label{CMact2}
\end{align}
where $a=-\bar{a}$.

\noindent 5)\, Complex reverse $x$-space-time shifted nonlocal Maccari system 
\begin{align}
&au_t(x,y,t)+u_{xx}(x,y,t)+D^{-1}_{y}[\sigma_1u(x,y,t)\bar{u}(x,y,t)-\sigma_1u(-x+x_0,y,-t+t_0)\bar{u}(-x+x_0,y,-t+t_0)\nonumber \\  &+\sigma_3w(x,y,t)\bar{w}(x,y,t)-\sigma_3w(-x+x_0,y,-t+t_0)\bar{w}(-x+x_0,y,-t+t_0)]_xu(x,y,t)=0, \label{CMacxt1}\\
&aw_t(x,y,t)+w_{xx}(x,y,t)+D^{-1}_{y}[\sigma_1u(x,y,t)\bar{u}(x,y,t)-\sigma_1u(-x+x_0,y,-t+t_0)\bar{u}(-x+x_0,y,-t+t_0) \nonumber \\ &+\sigma_3w(x,y,t)\bar{w}(x,y,t)-\sigma_3w(-x+x_0,y,-t+t_0)\bar{w}(-x+x_0,y,-t+t_0)]_xw(x,y,t)=0, \label{CMacxt2}
\end{align}
where $a=-\bar{a}$.

\noindent 6)\, Complex reverse $y$-space-time shifted nonlocal Maccari system 
\begin{align}
&au_t(x,y,t)+u_{xx}(x,y,t)+D^{-1}_{y}[\sigma_1u(x,y,t)\bar{u}(x,y,t)-\sigma_1u(x,-y+y_0,-t+t_0)\bar{u}(x,-y+y_0,-t+t_0)\nonumber\\ &+\sigma_3w(x,y,t)\bar{w}(x,y,t)-\sigma_3w(x,-y+y_0,-t+t_0)\bar{w}(x,-y+y_0,-t+t_0)]_xu(x,y,t)=0, \label{CMacyt1}\\
&aw_t(x,y,t)+w_{xx}(x,y,t)+D^{-1}_{y}[\sigma_1u(x,y,t)\bar{u}(x,y,t)-\sigma_1u(x,-y+y_0,-t+t_0)\bar{u}(x,-y+y_0,-t+t_0)\nonumber\\ &+\sigma_3w(x,y,t)\bar{w}(x,y,t)-\sigma_3w(x,-y+y_0,-t+t_0)\bar{w}(x,-y+y_0,-t+t_0)]_xw(x,y,t)=0, \label{CMacyt2}
\end{align}
where $a=-\bar{a}$.

\noindent 7)\, Complex reverse $xy$-space-time shifted nonlocal Maccari system 
\begin{align}
&au_t(x,y,t)+u_{xx}(x,y,t)+D^{-1}_{y}[\sigma_1u(x,y,t)\bar{u}(x,y,t)+\sigma_1u(-x+x_0,-y+y_0,-t+t_0)\nonumber\\ &\times\bar{u}(-x+x_0,-y+y_0,-t+t_0)+\sigma_3w(x,y,t)\bar{w}(x,y,t)+\sigma_3w(-x+x_0,-y+y_0,-t+t_0)\nonumber\\ &\times\bar{w}(-x+x_0,-y+y_0,-t+t_0)]_xu(x,y,t)=0,\label{CMacxyt1} \\
&aw_t(x,y,t)+w_{xx}(x,y,t)+D^{-1}_{y}[\sigma_1u(x,y,t)\bar{u}(x,y,t)+\sigma_1u(-x+x_0,-y+y_0,-t+t_0)\nonumber\\ &\times\bar{u}(-x+x_0,-y+y_0,-t+t_0)+\sigma_3w(x,y,t)\bar{w}(x,y,t)+\sigma_3w(-x+x_0,-y+y_0,-t+t_0)\nonumber\\ &\times\bar{w}(-x+x_0,-y+y_0,-t+t_0)]_xw(x,y,t)=0, \label{CMacxyt2}
\end{align}
where $a=-\bar{a}$.

\textbf{Remark.} Note that if $a=\bar{a}$ then the real and complex reverse $x$-space shifted nonlocal systems are the same. A similar result holds for the reverse $y$-space and $xy$-space shifted nonlocal systems.

\section{Shifted nonlocal Maccari equations}

We can further reduce the real and complex shifted nonlocal Maccari systems to shifted nonlocal Maccari equations by using the following reductions:
\begin{align}
 &\textbf{I.}\,  w(x,y,t)=\rho_3u(\mu_1x+X_0,\mu_2y+Y_0,\mu_3t+T_0),\,\, \rho_3^2=\mu_j^2=1, j=1, 2, 3,\,\, X_0, Y_0, T_0 \in \mathbb{R}, \label{rho3realtf}\\
 &\textbf{II.}\,  w(x,y,t)=\rho_3\bar{u}(\mu_1x+X_0,\mu_2y+Y_0,\mu_3t+T_0),\,\, \rho_3^2=\mu_j^2=1, j=1, 2, 3,\,\, X_0, Y_0, T_0 \in \mathbb{R}. \label{rho3complextf}
\end{align}

\subsection{Shifted nonlocal Maccari equations obtained from the real shifted nonlocal Maccari systems}

\noindent \textbf{A.}\, Let us first apply the reduction (\ref{rho3realtf}) to the real shifted nonlocal Maccari systems.

\noindent \textbf{A.I.} For consistent reduction of the real $x$-space shifted nonlocal Maccari system \eqref{RMacx1}-\eqref{RMacx2} under (\ref{rho3realtf}),  we get the constraints
 \begin{align}
\mu_3=1,\quad T_0=0, \quad \sigma_3=\mu_1\mu_2\sigma_1, \label{A1con}
\end{align}
and
\begin{align}
    &i)\, \mu_1=-1,\, X_0=x_0; \, \mu_1=1,\, X_0=0,\nonumber\\
    &ii)\, \mu_2=1,\, Y_0=0.
    \end{align}
Hence the system  \eqref{RMacx1}-\eqref{RMacx2} reduces to the following shifted nonlocal Maccari equations:
 \begin{align}
&1)\, au_t(x,y,t)+u_{xx}(x,y,t)+2\sigma_1D^{-1}_{y}[u(x,y,t)\bar{u}(x,y,t)-u(-x+x_0,y,t)\bar{u}(-x+x_0,y,t)]_x u(x,y,t)=0,\label{RRx1} \\
&2)\, au_t(x,y,t)+u_{xx}(x,y,t)+\sigma_1D^{-1}_{y}[u(x,y,t)\bar{u}(x,y,t)-u(-x+x_0,y,t)\bar{u}(-x+x_0,y,t)\nonumber\\  &-u(x,-y+Y_0,t)\bar{u}(x,-y+Y_0,t)+u(-x+x_0,-y+Y_0,t)\bar{u}(-x+x_0,-y+Y_0,t)]_xu(x,y,t)=0. \label{RRx2}
\end{align}
While the equation (\ref{RRx1}) is a two-place equation, the equation (\ref{RRx2}) is four-place.

\noindent \textbf{A.II.} To reduce the real $y$-space shifted nonlocal Maccari system \eqref{RMacy1}-\eqref{RMacy2} by (\ref{rho3realtf}) consistently we get the conditions (\ref{A1con}) and
\begin{align}
    &i)\, \mu_1=1,\, X_0=0,\nonumber\\
    &ii)\, \mu_2=-1,\, Y_0=y_0; \, \mu_2=1,\, Y_0=0.
    \end{align}
Therefore, the system  \eqref{RMacy1}-\eqref{RMacy2} reduces to the following shifted nonlocal Maccari equations:
\begin{align}
&1)\, au_t(x,y,t)+u_{xx}(x,y,t)+2\sigma_1D^{-1}_{y}[u(x,y,t)\bar{u}(x,y,t)-u(x,-y+y_0,t)\bar{u}(x,-y+y_0,t)]_xu(x,y,t)=0. \label{RRy1}\\
&2)\, au_t(x,y,t)+u_{xx}(x,y,t)+\sigma_1D^{-1}_{y}[u(x,y,t)\bar{u}(x,y,t)-u(x,-y+y_0,t)\bar{u}(x,-y+y_0,t)\nonumber\\ &-u(-x+X_0,y,t)\bar{u}(-x+X_0,y,t)+u(-x+X_0,-y+y_0,t)\bar{u}(-x+X_0,-y+y_0,t)]_x u(x,y,t)=0. \label{RRy2}
\end{align}
Here the equation (\ref{RRy1}) is a two-place equation, but (\ref{RRy2}) is a four-place equation.

\noindent \textbf{A.III.} For consistent reduction of the real $xy$-space shifted nonlocal Maccari system \eqref{RMacxy1}-\eqref{RMacxy2} under (\ref{rho3realtf}) we get the conditions (\ref{A1con}) and
\begin{align}
    &i)\, \mu_1=-1,\, X_0=x_0; \, \mu_1=1,\, X_0=0,\nonumber\\
    &ii)\, \mu_2=-1,\, Y_0=y_0; \, \mu_2=1,\, Y_0=0.
\end{align}
Hence the system  \eqref{RMacxy1}-\eqref{RMacxy2} reduces to the following shifted nonlocal Maccari equations:
\begin{align}
&1)\, au_t(x,y,t)+u_{xx}(x,y,t)+2\sigma_1D^{-1}_{y}[u(x,y,t)\bar{u}(x,y,t)+u(-x+x_0,-y+y_0,t)\bar{u}(-x+x_0,-y+y_0,t)]_x\nonumber\\
&\times u(x,y,t)=0, \label{RRxy1}\\
&2)\, au_t(x,y,t)+u_{xx}(x,y,t)+\sigma_1D^{-1}_{y}[u(x,y,t)\bar{u}(x,y,t)-u(-x+x_0,y,t)\bar{u}(-x+x_0,y,t)\nonumber\\
&-u(x,-y+y_0,t)\bar{u}(x,-y+y_0,t)+u(-x+x_0,-y+y_0,t)\bar{u}(-x+x_0,-y+y_0,t)]_x u(x,y,t)=0. \label{RRxy2}
\end{align}
The above first equation is two-place while the second equation is a four-place shifted nonlocal Maccari equation.

\noindent \textbf{B.}\, Now we shall consider the complex shifted reduction (\ref{rho3complextf}) and apply it on the real shifted nonlocal Maccari systems.

\noindent \textbf{B.I.}\, For consistent reduction of the real $x$-space shifted nonlocal Maccari system \eqref{RMacx1}-\eqref{RMacx2} under (\ref{rho3complextf}), we get the constraints as follows:
\begin{equation}
a=\bar{a}\mu_3,\quad \sigma_3=\mu_1\mu_2\sigma_1, \label{B1con}
\end{equation}
and
\begin{align}
    &i)\, \mu_1=-1,\, X_0=x_0; \, \mu_1=1,\, X_0=0,\nonumber\\
    &ii)\, \mu_2=1,\, Y_0=0,\nonumber\\
    &iii)\, \mu_3=1,\, T_0=0.
\end{align}
 \noindent Here we obtain the following four different shifted nonlocal Maccari equations:
 \begin{align}
&1)\, au_t(x,y,t)+u_{xx}(x,y,t)+2\sigma_1D^{-1}_{y}[u(x,y,t)\bar{u}(x,y,t)-u(-x+x_0,y,t)\bar{u}(-x+x_0,y,t)]_x\nonumber\\
&\times u(x,y,t)=0,\, a=\bar{a}, \label{B1.1}\\
&2)\,au_t(x,y,t)+u_{xx}(x,y,t)+\sigma_1D^{-1}_{y}[u(x,y,t)\bar{u}(x,y,t)-u(-x+x_0,y,t)\bar{u}(-x+x_0,y,t)\nonumber \\  &-u(x,-y+Y_0,t)\bar{u}(x,-y+Y_0,t)+u(-x+x_0,-y+Y_0,t)\bar{u}(-x+x_0,-y+Y_0,t)]_xu(x,y,t)=0,\, a=\bar{a}, \label{B1.2}\\
&3)\,au_t(x,y,t)+u_{xx}(x,y,t)+\sigma_1D^{-1}_{y}[u(x,y,t)\bar{u}(x,y,t)-u(-x+x_0,y,t)\bar{u}(-x+x_0,y,t)\nonumber \\  &+u(x,y,-t+T_0)\bar{u}(x,y,-t+T_0)-u(-x+x_0,y,-t+T_0)\bar{u}(-x+x_0,y,-t+T_0)]_x u(x,y,t)=0,\, a=-\bar{a}, \label{CRx3}\\
&4)\,au_t(x,y,t)+u_{xx}(x,y,t)+\sigma_1D^{-1}_{y}[u(x,y,t)\bar{u}(x,y,t)-u(-x+x_0,y,t)\bar{u}(-x+x_0,y,t)\nonumber \\  &-u(x,-y+Y_0,-t+T_0)\bar{u}(x,-y+Y_0,-t+T_0)+u(-x+x_0,-y+Y_0,-t+T_0)\nonumber\\
&\times\bar{u}(-x+x_0,-y+Y_0,-t+T_0)]_xu(x,y,t)=0,\, a=-\bar{a}. \label{CRx4}
\end{align}
The first equation above is a two-place equation but the others are four-place equations.

\noindent \textbf{B.II.}\, To reduce the real $y$-space shifted nonlocal Maccari system \eqref{RMacy1}-\eqref{RMacy2} under the reduction (\ref{rho3complextf}) consistently, we get the conditions (\ref{B1con}) and
\begin{align}
    &i)\, \mu_1=1,\, X_0=0,\nonumber\\
    &ii)\, \mu_2=-1,\, Y_0=y_0\,;\, \mu_2=1,\, Y_0=0,\nonumber\\
    &iii)\, \mu_3=1,\, T_0=0.
\end{align}
Hence we obtain four different shifted nonlocal Maccari equations:
 \begin{align}
&1)\, au_t(x,y,t)+u_{xx}(x,y,t)+2\sigma_1D^{-1}_{y}[u(x,y,t)\bar{u}(x,y,t)-u(x,-y+y_0,t)\bar{u}(x,-y+y_0,t)]_x\nonumber \\
&\times u(x,y,t)=0,\,\, a=\bar{a}, \label{B2.1}\\
&2)\, au_t(x,y,t)+u_{xx}(x,y,t)+\sigma_1D^{-1}_{y}[u(x,y,t)\bar{u}(x,y,t)-u(x,-y+y_0,t)\bar{u}(x,-y+y_0,t)\nonumber \\  &-u(-x+X_0,y,t)\bar{u}(-x+X_0,y,t)+u(-x+X_0,-y+y_0,t)\bar{u}(-x+X_0,-y+y_0,t)]_x\nonumber \\  &\times u(x,y,t)=0,\,\, a=\bar{a}, \label{B2.2}\\
&3)\, au_t(x,y,t)+u_{xx}(x,y,t)+\sigma_1D^{-1}_{y}[u(x,y,t)\bar{u}(x,y,t)-u(x,-y+y_0,t)\bar{u}(x,-y+y_0,t)\nonumber \\  &+u(x,y,-t+T_0)\bar{u}(x,y,-t+T_0)-u(x,-y+y_0,-t+T_0)\bar{u}(x,-y+y_0,-t+T_0)]_x\nonumber \\  &\times u(x,y,t)=0,\,\, a=-\bar{a}, \label{CRy3}\\
&4)\,au_t(x,y,t)+u_{xx}(x,y,t)+\sigma_1D^{-1}_{y}[u(x,y,t)\bar{u}(x,y,t)-u(x,-y+y_0,t)\bar{u}(x,-y+y_0,t)\nonumber \\  &-u(-x+X_0,y,-t+T_0)\bar{u}(-x+X_0,y,-t+T_0)+u(-x+X_0,-y+y_0,-t+T_0)\nonumber \\  &\times \bar{u}(-x+X_0,-y+y_0,-t+T_0)]_xu(x,y,t)=0,\,\, a=-\bar{a}. \label{CRy4}
\end{align}
\noindent The first equation above is a two-place equation but the others are four-place equations.

\noindent \textbf{B.III.}\, Applying the reduction (\ref{rho3complextf}) to the real $xy$-space shifted nonlocal Maccari system \eqref{RMacxy1}-\eqref{RMacxy2} gives the conditions (\ref{B1con}) and
\begin{align}
    &i)\, \mu_1=-1,\, X_0=x_0\,;\, \mu_1=1,\, X_0=0,\nonumber\\
    &ii)\, \mu_2=-1,\, Y_0=y_0\,;\, \mu_2=1,\, Y_0=0,\nonumber\\
    &iii)\, \mu_3=1,\, T_0=0.
\end{align}
\noindent In this case we have four different shifted nonlocal Maccari equations:
 \begin{align}
&1)\,au_t(x,y,t)+u_{xx}(x,y,t)+2\sigma_1D^{-1}_{y}[u(x,y,t)\bar{u}(x,y,t)+u(-x+x_0,-y+y_0,t)\nonumber \\  &\times \bar{u}(-x+x_0,-y+y_0,t)]_x u(x,y,t)=0,\,\, a=\bar{a},\label{B3.1}\\
&2)\, au_t(x,y,t)+u_{xx}(x,y,t)+\sigma_1D^{-1}_{y}[u(x,y,t)\bar{u}(x,y,t)+u(-x+x_0,-y+y_0,t)\bar{u}(-x+x_0,-y+y_0,t)\nonumber\\
&-u(-x+x_0,y,t)\bar{u}(-x+x_0,y,t)-u(x,-y+y_0,t)\bar{u}(x,-y+y_0,t)]_x  u(x,y,t)=0,\,\, a=\bar{a}, \label{B3.2}\\
&3)\, au_t(x,y,t)+u_{xx}(x,y,t)+\sigma_1D^{-1}_{y}[u(x,y,t)\bar{u}(x,y,t)+u(-x+x_0,-y+y_0,t)\nonumber \\  &\times \bar{u}(-x+x_0,-y+y_0,t)+u(x,y,-t+T_0)\bar{u}(x,y,-t+T_0)\nonumber \\  &+u(-x+x_0,-y+y_0,-t+T_0)\bar{u}(-x+x_0,-y+y_0,-t+T_0)]_xu(x,y,t)=0,\,\, a=-\bar{a}, \label{CRxy3}\\
&4)\,au_t(x,y,t)+u_{xx}(x,y,t)+\sigma_1D^{-1}_{y}[u(x,y,t)\bar{u}(x,y,t)+u(-x+x_0,-y+y_0,t)\nonumber \\  &\times \bar{u}(-x+x_0,-y+y_0,t)-u(x,-y+y_0,-t+T_0)\bar{u}(x,-y+y_0,-t+T_0)\nonumber \\  &-u(-x+x_0,y,-t+T_0)\bar{u}(-x+x_0,y,-t+T_0)]_xu(x,y,t)=0,\,\, a=-\bar{a}. \label{CRxy4}
\end{align}
The equation (\ref{B3.1}) is a two-place equation but (\ref{B3.2}), (\ref{CRxy3}), and (\ref{CRxy4}) are four-place equations.

\subsection{Shifted nonlocal Maccari equations obtained from the complex shifted nonlocal Maccari systems}

\noindent Similar to the real shifted nonlocal Maccari systems, we can further reduce the complex shifted nonlocal systems to shifted nonlocal equations by the reductions (\ref{rho3realtf}) and (\ref{rho3complextf}).

\noindent \textbf{Remark.} Recall that the real $x$-, $y$-, and $xy$-space shifted nonlocal Maccari systems and the complex $x$-, $y$-, and $xy$-space shifted nonlocal Maccari systems are the same if $a=\bar{a}$. Under this condition applying (\ref{rho3realtf}) gives exactly the same equations; (\ref{RRx1}) and (\ref{RRx2}) from the $x$-space shifted nonlocal systems, (\ref{RRy1}) and (\ref{RRy2}) from the $y$-space shifted nonlocal systems, (\ref{RRxy1}) and (\ref{RRxy2}) from the $xy$-space shifted nonlocal systems. Therefore, here we will not consider the equations obtained from complex $x$-, $y$-, and $xy$-space shifted nonlocal Maccari systems with the reduction (\ref{rho3realtf}). Similarly, if we apply (\ref{rho3complextf}) onto complex $x$-, $y$-, and $xy$-space shifted nonlocal Maccari systems we exactly get all the equations in B.I, B.II, and B.III.

\noindent \textbf{C.}\, We will first use the reduction (\ref{rho3realtf}) on the complex shifted nonlocal Maccari systems.

\noindent \textbf{C.I.}\, Applying the reduction (\ref{rho3realtf}) to the complex reverse time shifted nonlocal Maccari system \eqref{CMact1}-\eqref{CMact2}, where $a=-\bar{a}$, gives the below conditions to be satisfied:
\begin{equation}
\mu_3=1,\quad T_0=0,\quad \sigma_1=\mu_1\mu_2\sigma_3,    \label{A4con}
\end{equation} and
\begin{align}
    &i)\, \mu_1=1,\, X_0=0,\nonumber\\
    &ii)\, \mu_2=1,\, Y_0=0.
    \end{align}
\noindent Therefore, we obtain the following four different shifted nonlocal Maccari equations:
\begin{align}
&1)\, au_t(x,y,t)+u_{xx}(x,y,t)+2\sigma_1D^{-1}_{y}[u(x,y,t)\bar{u}(x,y,t)+u(x,y,-t+t_0)\bar{u}(x,y,-t+t_0)]_xu(x,y,t)=0, \label{RCt1}\\
&2)\,au_t(x,y,t)+u_{xx}(x,y,t)+\sigma_1D^{-1}_{y}[u(x,y,t)\bar{u}(x,y,t)+u(x,y,-t+t_0)\bar{u}(x,y,-t+t_0)\nonumber\\
&-u(x,-y+Y_0,t)\bar{u}(x,-y+Y_0,t)-u(x,-y+Y_0,-t+t_0)\bar{u}(x,-y+Y_0,-t+t_0)]_x\nonumber\\
&\times u(x,y,t)=0, \label{RCt2}\\
&3)\,au_t(x,y,t)+u_{xx}(x,y,t)+\sigma_1D^{-1}_{y}[u(x,y,t)\bar{u}(x,y,t)+u(x,y,-t+t_0)\bar{u}(x,y,-t+t_0)\nonumber \\  &-u(-x+X_0,y,t)\bar{u}(-x+X_0,y,t)-u(-x+X_0,y,-t+t_0)\bar{u}(-x+X_0,y,-t+t_0)]_xu(x,y,t)=0, \label{RCt3}\\
&4)\,au_t(x,y,t)+u_{xx}(x,y,t)+\sigma_1D^{-1}_{y}[u(x,y,t)\bar{u}(x,y,t)+u(x,y,-t+t_0)\bar{u}(x,y,-t+t_0)\nonumber \\  &+u(-x+X_0,-y+Y_0,t)\bar{u}(-x+X_0,-y+Y_0,t)+u(-x+X_0,-y+Y_0,-t+t_0)\nonumber\\
&\times \bar{u}(-x+X_0,-y+Y_0,-t+t_0)]_x u(x,y,t)=0. \label{RCt4}
\end{align}
The equation (\ref{RCt1}) is a two-place equation but the equations (\ref{RCt2}), (\ref{RCt3}), and (\ref{RCt4}) are four-place.

\noindent \textbf{C.II.}\, When we use the reduction \eqref{rho3realtf} on the complex reverse $x$-space-time shifted nonlocal Maccari system \eqref{CMacxt1}-\eqref{CMacxt2}, where $a=-\bar{a}$,  we have the conditions (\ref{A4con}) and
\begin{align}
    &i)\, \mu_1=-1,\, X_0=x_0\,;\, \mu_1=1,\, X_0=0,\nonumber\\
    &ii)\, \mu_2=1,\, Y_0=0,
    \end{align}
for consistency. We obtain four different shifted nonlocal Maccari equations;
\begin{align}
&1)\, au_t(x,y,t)+u_{xx}(x,y,t)+2\sigma_1D^{-1}_{y}[u(x,y,t)\bar{u}(x,y,t)-u(-x+x_0,y,-t+t_0)\bar{u}(-x+x_0,y,-t+t_0)]_x\nonumber \\  &\times u(x,y,t)=0, \label{RCxt1}\\
&2)\, au_t(x,y,t)+u_{xx}(x,y,t)+\sigma_1D^{-1}_{y}[u(x,y,t)\bar{u}(x,y,t)-u(-x+x_0,y,-t+t_0)\bar{u}(-x+x_0,y,-t+t_0)\nonumber \\  &-u(x,-y+Y_0,t)\bar{u}(x,-y+Y_0,t)+u(-x+x_0,-y+Y_0,-t+t_0)\bar{u}(-x+x_0,-y+Y_0,-t+t_0)]_x\nonumber \\  &\times u(x,y,t)=0, \label{RCxt2}\\
&3)\, au_t(x,y,t)+u_{xx}(x,y,t)+\sigma_1D^{-1}_{y}[u(x,y,t)\bar{u}(x,y,t)-u(-x+x_0,y,-t+t_0)\bar{u}(-x+x_0,y,-t+t_0)\nonumber \\  &-u(-x+x_0,y,t)\bar{u}(-x+x_0,y,t)+u(x,y,-t+t_0)\bar{u}(x,y,-t+t_0)]_x u(x,y,t)=0, \label{RCxt3}\\
&4)\, au_t(x,y,t)+u_{xx}(x,y,t)+\sigma_1D^{-1}_{y}[u(x,y,t)\bar{u}(x,y,t)-u(-x+x_0,y,-t+t_0) \bar{u}(-x+x_0,y,-t+t_0)\nonumber \\  &+u(-x+x_0,-y+Y_0,t)\bar{u}(-x+x_0,-y+Y_0,t)-u(x,-y+Y_0,-t+t_0)\bar{u}(x,-y+Y_0,-t+t_0)]_x
\nonumber \\  & \times u(x,y,t)=0, \label{RCxt4}
\end{align}
where only the first equation is a two-place equation but the others are four-place.

\noindent \textbf{C.III.}\, Applying (\ref{rho3realtf}) to the complex reverse $y$-space-time shifted nonlocal Maccari system \eqref{CMacyt1}-\eqref{CMacyt2}, where $a=-\bar{a}$, to derive shifted nonlocal Maccari equations consistently, we get the conditions (\ref{A4con}) and
\begin{align}
    &i)\, \mu_1=1,\, X_0=0,\nonumber\\
    &ii)\, \mu_2=-1,\, Y_0=y_0\,;\, \mu_2=1,\, Y_0=0.
 \end{align}
The resulting shifted nonlocal Maccari equations are
\begin{align}
&1)\, au_t(x,y,t)+u_{xx}(x,y,t)+2\sigma_1D^{-1}_{y}[u(x,y,t)\bar{u}(x,y,t)-u(x,-y+y_0,-t+t_0)\bar{u}(x,-y+y_0,-t+t_0)]_x
\nonumber \\  & \times u(x,y,t)=0, \label{RCyt1}\\
&2)\, au_t(x,y,t)+u_{xx}(x,y,t)+\sigma_1D^{-1}_{y}[u(x,y,t)\bar{u}(x,y,t)-u(x,-y+y_0,-t+t_0)\bar{u}(x,-y+y_0,-t+t_0)\nonumber \\  &-u(x,-y+y_0,t)\bar{u}(x,-y+y_0,t)+u(x,y,-t+t_0)\bar{u}(x,y,-t+t_0)]_x u(x,y,t)=0, \label{RCyt2}\\
&3)\, au_t(x,y,t)+u_{xx}(x,y,t)+\sigma_1D^{-1}_{y}[u(x,y,t)\bar{u}(x,y,t)-u(x,-y+y_0,-t+t_0)\bar{u}(x,-y+y_0,-t+t_0)
\nonumber \\  &-u(-x+X_0,y,t)\bar{u}(-x+X_0,y,t)+u(-x+X_0,-y+y_0,-t+t_0)\bar{u}(-x+X_0,-y+y_0,-t+t_0)]_x
\nonumber \\  &\times u(x,y,t)=0, \label{RCyt3}\\
&4)\, au_t(x,y,t)+u_{xx}(x,y,t)+\sigma_1D^{-1}_{y}[u(x,y,t)\bar{u}(x,y,t)-u(x,-y+y_0,-t+t_0) \bar{u}(x,-y+y_0,-t+t_0)
\nonumber \\  & +u(-x+X_0,-y+y_0,t)\bar{u}(-x+X_0,-y+y_0,t)-u(-x+X_0,y,-t+t_0)\bar{u}(-x+X_0,y,-t+t_0)]_x
\nonumber \\ &\times u(x,y,t)=0. \label{RCyt4}
\end{align}
The equation (\ref{RCyt1}) is two-place but (\ref{RCyt2}), (\ref{RCyt3}), and (\ref{RCyt4}) are four-place equations.

 \noindent \textbf{C.IV.}\, Using (\ref{rho3realtf}) on the complex reverse $xy$-space-time shifted nonlocal Maccari system \eqref{CMacxyt1}-\eqref{CMacxyt2}, where $a=-\bar{a}$, yields the conditions (\ref{A4con}) and
\begin{align}
    &i)\, \mu_1=-1,\, X_0=x_0\,;\, \mu_1=1,\, X_0=0,\nonumber\\
    &ii)\, \mu_2=-1,\, Y_0=y_0\,;\, \mu_2=1,\, Y_0=0.
    \end{align}
 Therefore, we get
\begin{align}
&1)\, au_t(x,y,t)+u_{xx}(x,y,t)+2\sigma_1D^{-1}_{y}[u(x,y,t)\bar{u}(x,y,t)+u(-x+x_0,-y+y_0,-t+t_0) \nonumber\\ &\times \bar{u}(-x+x_0,-y+y_0,-t+t_0)]_x
 u(x,y,t)=0, \label{RCxyt1}\\
&2)\, au_t(x,y,t)+u_{xx}(x,y,t)+\sigma_1D^{-1}_{y}[u(x,y,t)\bar{u}(x,y,t)+u(-x+x_0,-y+y_0,-t+t_0)\nonumber \\  &\times\bar{u}(-x+x_0,-y+y_0,-t+t_0)-u(x,-y+y_0,t)\bar{u}(x,-y+y_0,t)-u(-x+x_0,y,-t+t_0)\nonumber \\  &\times \bar{u}(-x+x_0,y,-t+t_0)]_x u(x,y,t)=0, \label{RCxyt2}\\
&3)\, au_t(x,y,t)+u_{xx}(x,y,t)+\sigma_1D^{-1}_{y}[u(x,y,t)\bar{u}(x,y,t)+u(-x+x_0,-y+y_0,-t+t_0)\nonumber \\  &\times\bar{u}(-x+x_0,-y+y_0,-t+t_0)-u(-x+x_0,y,t)\bar{u}(-x+x_0,y,t)-u(x,-y+y_0,-t+t_0)\nonumber \\  &\times\bar{u}(x,-y+y_0,-t+t_0)]_x u(x,y,t)=0, \label{RCxyt3}\\
&4)\, au_t(x,y,t)+u_{xx}(x,y,t)+\sigma_1D^{-1}_{y}[u(x,y,t)\bar{u}(x,y,t)+u(-x+x_0,-y+y_0,-t+t_0)\nonumber \\  &\times \bar{u}(-x+x_0,-y+y_0,-t+t_0)+u(-x+x_0,-y+y_0,t)\bar{u}(-x+x_0,-y+y_0,t)\nonumber \\  &+u(x,y,-t+t_0)\bar{u}(x,y,-t+t_0)]_x u(x,y,t)=0, \label{RCxyt4}
\end{align}
as consistent shifted nonlocal reductions of \eqref{CMacxyt1}-\eqref{CMacxyt2}. The first equation above is a two-place equation but the rest is four-place.

\noindent \textbf{D.}\, Now we use the reduction (\ref{rho3complextf}) on the complex shifted nonlocal  Maccari systems..

\noindent \textbf{D.I.}\, Application of the reduction \eqref{rho3complextf} on the complex reverse time shifted nonlocal Maccari system \eqref{CMact1}-\eqref{CMact2} gives the conditions
\begin{align}
 a=\bar{a}\mu_3,\quad \sigma_1=\mu_1\mu_2\sigma_3, \label{B4con}
\end{align} with
\begin{align}
    &i)\, \mu_1=1,\, X_0=0,\nonumber\\
    &ii)\, \mu_2=1,\, Y_0=0,\nonumber\\
    &iii)\, \mu_3=-1,\, T_0=t_0\,;\, \mu_3=1,\, T_0=0.
\end{align}
Since for the system \eqref{CMact1}-\eqref{CMact2} we have $a=-\bar{a}$, we will not consider the case $\mu_3=1, T_0=0$. The resulting shifted nonlocal Maccari equations are
\begin{align}
&1)\, au_t(x,y,t)+u_{xx}(x,y,t)+2\sigma_1D^{-1}_{y}[u(x,y,t)\bar{u}(x,y,t)+u(x,y,-t+t_0)\bar{u}(x,y,-t+t_0)]_x\nonumber \\  &\times u(x,y,t)=0, \label{CCt1}\\
&2)\, au_t(x,y,t)+u_{xx}(x,y,t)+\sigma_1D^{-1}_{y}[u(x,y,t)\bar{u}(x,y,t)+u(x,y,-t+t_0)\bar{u}(x,y,-t+t_0)\nonumber \\  &-u(x,-y+Y_0,t)\bar{u}(x,-y+Y_0,t)-u(x,-y+Y_0,-t+t_0)\bar{u}(x,-y+Y_0,-t+t_0)]_x\nonumber \\  &\times u(x,y,t)=0, \label{CCt2}\\
&3)\, au_t(x,y,t)+u_{xx}(x,y,t)+\sigma_1D^{-1}_{y}[u(x,y,t)\bar{u}(x,y,t)+u(x,y,-t+t_0)\bar{u}(x,y,-t+t_0)\nonumber \\  &-u(-x+X_0,y,t)\bar{u}(-x+X_0,y,t)-u(-x+X_0,y,-t+t_0)\bar{u}(-x+X_0,y,-t+t_0)]_x\nonumber \\  &\times u(x,y,t)=0, \label{CCt3}\\
&4)\, au_t(x,y,t)+u_{xx}(x,y,t)+\sigma_1D^{-1}_{y}[u(x,y,t)\bar{u}(x,y,t)+u(x,y,-t+t_0)\bar{u}(x,y,-t+t_0)\nonumber \\  &+u(-x+X_0,-y+Y_0,t)\bar{u}(-x+X_0,-y+Y_0,t)+u(-x+X_0,-y+Y_0,-t+t_0)\nonumber \\  &\times\bar{u}(-x+X_0,-y+Y_0,-t+t_0)]_x u(x,y,t)=0. \label{CCt4}
\end{align}
While the equation (\ref{CCt1}) is a two-place equation, the equations (\ref{CCt2}), (\ref{CCt3}), and (\ref{CCt4}) are four-place.

\noindent \textbf{D.II.}\, Using the reduction \eqref{rho3complextf} on the complex reverse $x$-space-time shifted nonlocal Maccari system \eqref{CMacxt1}-\eqref{CMacxt2} requires the conditions (\ref{B4con}) with
\begin{align}
    &i)\, \mu_1=-1,\, X_0=x_0\,;\, \mu_1=1,\, X_0=0,\nonumber\\
    &ii)\, \mu_2=1,\, Y_0=0,\nonumber\\
    &iii)\, \mu_3=-1,\, T_0=t_0\,;\, \mu_3=1,\, T_0=0,
\end{align}
to be satisfied for consistency. Note that for the system \eqref{CMacxt1}-\eqref{CMacxt2} we already have $a=-\bar{a}$, so the case $\mu_3=1, T_0=0$ is not possible here. We obtain four different shifted nonlocal Maccari equations:
\begin{align}
&1)\, au_t(x,y,t)+u_{xx}(x,y,t)+2\sigma_1D^{-1}_{y}[u(x,y,t)\bar{u}(x,y,t)-u(-x+x_0,y,-t+t_0)\bar{u}(-x+x_0,y,-t+t_0)]_x\nonumber \\  &\times u(x,y,t)=0, \label{CCxt1}\\
&2)\, au_t(x,y,t)+u_{xx}(x,y,t)+\sigma_1D^{-1}_{y}[u(x,y,t)\bar{u}(x,y,t)-u(-x+x_0,y,-t+t_0)\bar{u}(-x+x_0,y,-t+t_0)\nonumber \\  &-u(x,-y+Y_0,t)\bar{u}(x,-y+Y_0,t)+u(-x+x_0,-y+Y_0,-t+t_0)\bar{u}(-x+x_0,-y+Y_0,-t+t_0)]_x\nonumber \\  &\times u(x,y,t)=0, \label{CCxt2}\\
&3)\, au_t(x,y,t)+u_{xx}(x,y,t)+\sigma_1D^{-1}_{y}[u(x,y,t)\bar{u}(x,y,t)-u(-x+x_0,y,-t+t_0)\bar{u}(-x+x_0,y,-t+t_0)\nonumber \\  &-u(-x+x_0,y,t)\bar{u}(-x+x_0,y,t)+u(x,y,-t+t_0)\bar{u}(x,y,-t+t_0)]_xu(x,y,t)=0, \label{CCxt3}\\
&4)\, au_t(x,y,t)+u_{xx}(x,y,t)+\sigma_1D^{-1}_{y}[u(x,y,t)\bar{u}(x,y,t)-u(-x+x_0,y,-t+t_0) \bar{u}(-x+x_0,y,-t+t_0)
\nonumber \\& +u(-x+x_0,-y+Y_0,t)\bar{u}(-x+x_0,-y+Y_0,t)-u(x,-y+Y_0,-t+t_0)\bar{u}(x,-y+Y_0,-t+t_0)]_x\nonumber \\  &\times u(x,y,t)=0. \label{CCxt4}
\end{align}
Here the first equation is a two-place equation, but (\ref{CCxt2}), (\ref{CCxt3}), and (\ref{CCxt4}) are four-place.

\noindent \textbf{D.III.}\, Now applying the reduction \eqref{rho3complextf} to the complex reverse $y$-space-time shifted nonlocal Maccari system \eqref{CMacyt1}-\eqref{CMacyt2} we obtain the conditions (\ref{B4con}) with
\begin{align}
    &i)\, \mu_1=1,\, X_0=0,\nonumber\\
    &ii)\, \mu_2=-1,\, Y_0=y_0\,;\, \mu_2=1,\, Y_0=0,\nonumber\\
    &iii)\, \mu_3=-1,\, T_0=t_0\,;\, \mu_3=1,\, T_0=0,
\end{align}
to get shifted nonlocal equations consistently. Recall that for the system \eqref{CMacyt1}-\eqref{CMacyt2} we already have $a=-\bar{a}$, therefore, the case $\mu_3=1, T_0=0$ is not possible here. The resulting reduced equations are
\begin{align}
&1)\, au_t(x,y,t)+u_{xx}(x,y,t)+2\sigma_1D^{-1}_{y}[u(x,y,t)\bar{u}(x,y,t)-u(x,-y+y_0,-t+t_0)\bar{u}(x,-y+y_0,-t+t_0)]_x\nonumber \\  &\times u(x,y,t)=0, \label{CCyt1}\\
&2)\, au_t(x,y,t)+u_{xx}(x,y,t)+\sigma_1D^{-1}_{y}[u(x,y,t)\bar{u}(x,y,t)-u(x,-y+y_0,-t+t_0)\bar{u}(x,-y+y_0,-t+t_0)\nonumber \\  &-u(x,-y+y_0,t)\bar{u}(x,-y+y_0,t)+u(x,y,-t+t_0)\bar{u}(x,y,-t+t_0)]_x u(x,y,t)=0, \label{CCyt2}\\
&3)\, au_t(x,y,t)+u_{xx}(x,y,t)+\sigma_1D^{-1}_{y}[u(x,y,t)\bar{u}(x,y,t)-u(x,-y+y_0,-t+t_0)\bar{u}(x,-y+y_0,-t+t_0)\nonumber \\  &-u(-x+X_0,y,t)\bar{u}(-x+X_0,y,t)+u(-x+X_0,-y+y_0,-t+t_0)\bar{u}(-x+X_0,-y+y_0,-t+t_0)]_x\nonumber \\  &\times u(x,y,t)=0, \label{CCyt3}\\
&4)\, au_t(x,y,t)+u_{xx}(x,y,t)+\sigma_1D^{-1}_{y}[u(x,y,t)\bar{u}(x,y,t)-u(x,-y+y_0,-t+t_0)\bar{u}(x,-y+y_0,-t+t_0)
\nonumber\\&+u(-x+X_0,-y+y_0,t)\bar{u}(-x+X_0,-y+y_0,t)-u(-x+X_0,y,-t+t_0)\bar{u}(-x+X_0,y,-t+t_0)]_x\nonumber \\  &\times u(x,y,t)=0. \label{CCyt4}
\end{align}
The equation (\ref{CCyt1}) is two-place while (\ref{CCyt2}), (\ref{CCyt3}), and (\ref{CCyt4}) are four-place equations.

\noindent \textbf{D.IV.}\, Finally, applying the reduction \eqref{rho3complextf} to the complex reverse $xy$-space-time shifted nonlocal Maccari system \eqref{CMacxyt1}-\eqref{CMacxyt2} gives the conditions (\ref{B4con}) with
\begin{align}
    &i)\, \mu_1=-1,\, X_0=x_0\,;\, \mu_1=1,\, X_0=0,\nonumber\\
    &ii)\, \mu_2=-1,\, Y_0=y_0\,;\, \mu_2=1,\, Y_0=0,\nonumber\\
    &iii)\, \mu_3=-1,\, T_0=t_0\,;\, \mu_3=1,\, T_0=0,
\end{align}
yielding the following four different shifted nonlocal Maccari equations:
\begin{align}
&1)\, au_t(x,y,t)+u_{xx}(x,y,t)+2\sigma_1D^{-1}_{y}[u(x,y,t)\bar{u}(x,y,t)+u(-x+x_0,-y+y_0,-t+t_0)\nonumber \\  &\times\bar{u}(-x+x_0,-y+y_0,-t+t_0)]_x u(x,y,t)=0, \label{CCxyt1}\\
&2)\, au_t(x,y,t)+u_{xx}(x,y,t)+\sigma_1D^{-1}_{y}[u(x,y,t)\bar{u}(x,y,t)+u(-x+x_0,-y+y_0,-t+t_0)
\nonumber \\  &\times\bar{u}(-x+x_0,-y+y_0,-t+t_0)-u(x,-y+y_0,t)\bar{u}(x,-y+y_0,t)-u(-x+x_0,y,-t+t_0)\nonumber \\  &\times \bar{u}(-x+x_0,y,-t+t_0)]_x u(x,y,t)=0, \label{CCxyt2}\\
&3)\, au_t(x,y,t)+u_{xx}(x,y,t)+\sigma_1D^{-1}_{y}[u(x,y,t)\bar{u}(x,y,t)+u(-x+x_0,-y+y_0,-t+t_0)\nonumber \\  &\times\bar{u}(-x+x_0,-y+y_0,-t+t_0)-u(-x+x_0,y,t)\bar{u}(-x+x_0,y,t)-u(x,-y+y_0,-t+t_0)\nonumber \\  &\times\bar{u}(x,-y+y_0,-t+t_0)]_x u(x,y,t)=0, \label{CCxyt3}\\
&4)\, au_t(x,y,t)+u_{xx}(x,y,t)+\sigma_1D^{-1}_{y}[u(x,y,t)\bar{u}(x,y,t)+u(-x+x_0,-y+y_0,-t+t_0)\nonumber \\  &\times \bar{u}(-x+x_0,-y+y_0,-t+t_0)+u(-x+x_0,-y+y_0,t)\bar{u}(-x+x_0,-y+y_0,t)\nonumber \\  &+u(x,y,-t+t_0)\bar{u}(x,y,-t+t_0)]_x u(x,y,t)=0. \label{CCxyt4}
\end{align}
In the above list of equations, while the equation (\ref{CCxyt1}) is two-place, the equations (\ref{CCxyt2}), (\ref{CCxyt3}), and (\ref{CCxyt4}) are four-place. Note that similar to the previous cases the system \eqref{CMacxyt1}-\eqref{CMacxyt2} requires $a=-\bar{a}$ so the case $\mu_3=1, T_0=0$ is not valid here.

\section{Soliton solutions of the $5$-component Maccari system}

To obtain soliton solutions of the $5$-component Maccari system \eqref{M1}-\eqref{M5} we use the Hirota method.

\noindent \textbf{Step 1: Bilinearization}. Consider the following rational transformation as bilinearizing transformation for the Maccari system \eqref{M1}-\eqref{M5},
\begin{align}
u=\frac{g}{f}, \quad v=\frac{h}{f}, \quad w=\frac{s}{f}, \quad z=\frac{q}{f}, \quad p=2(\ln f)_{xx}.  \label{TFMac}
\end{align}
Equations \eqref{M1}-\eqref{M4}, respectively, become
\begin{align}
&ag_tf-agf_t+g_{xx}-2g_xf_x+gf_{xx}=0,\label{eq1} \\
&ah_tf-ahf_t+h_{xx}-2h_xf_x+hf_{xx}=0,\label{eq2} \\
&as_tf-asf_t+s_{xx}-2s_xf_x+sf_{xx}=0,\label{eq3} \\
&aq_tf-aqf_t+q_{xx}-2q_xf_x+qf_{xx}=0,\label{eq4}
\end{align}
and \eqref{M5} turns to be
\begin{align}
&\sigma_1\bigg(-\frac{\bar{g}g_x}{f^2}-\frac{g\bar{g}_x}{f^2}+\frac{2g\bar{g}f_x}{f^3}\bigg)
+\sigma_2\bigg(-\frac{\bar{h}h_x}{f^2}-\frac{h\bar{h}_x}{f^2}+\frac{2h\bar{h}f_x}{f^3}\bigg)+\sigma_3
\bigg(-\frac{\bar{s}s_x}{f^2}-\frac{s\bar{s}_x}{f^2}+\frac{2s\bar{s}f_x}{f^3}\bigg) \nonumber\\
&+\sigma_4\bigg(-\frac{\bar{q}q_x}{f^2}-\frac{q\bar{q}_x}{f^2}+\frac{2q\bar{q}f_x}{f^3}\bigg)+\frac{2f_{xxy}f}{f^2}-\frac{2f_{xx}f_y}{f^2}-\frac{4f_xf_{xy}}{f^2}
+\frac{4f^2_xf_y}{f^3}=0.\label{eq5}
\end{align}

\noindent \textbf{Step 2: Transformation into Hirota bilinear form}. Here we use the Hirota $D$-operator which is a special differential operator given by
\begin{equation}
D_t^nD_x^m\{F\cdot G\}=\Big(\frac{\partial}{\partial t}-\frac{\partial}{\partial t'}\Big)^n\Big(\frac{\partial}{\partial x}-\frac{\partial}{\partial x'}\Big)^m\,F(x,y,t)G(x',y',t')|_{x'=x,t'=t,y'=y},\quad m, n \in \mathbb{N}.
\end{equation}
Arranging the equations (\ref{eq1})-(\ref{eq5}) by using the Hirota $D$-operator we get the Hirota bilinear
form of the $5$-component Maccari system as
\begin{align}
P_1(D)\{g.f\}&=(aD_t+D^2_x)\{g.f\}=0, \label{P1Mac}\\
P_2(D)\{h.f\}&=(aD_t+D^2_x)\{h.f\}=0, \label{P2Mac}\\
P_3(D)\{s.f\}&=(aD_t+D^2_x)\{s.f\}=0, \label{P3Mac}\\
P_4(D)\{q.f\}&=(aD_t+D^2_x)\{q.f\}=0, \label{P4Mac}\\
D_xD_y\{f.f\}&=\sigma_1g\bar{g}+\sigma_2h\bar{h}+\sigma_3s\bar{s}+\sigma_4q\bar{q}. \label{P5Mac}
\end{align}

\noindent \textbf{Step 3: Hirota perturbation}. To find $N$-soliton solutions of the $5$-component Maccari system, we use the following expansions for the functions $g$, $h$, $s$, $q$, and $f$:
\begin{align}
g=\sum_{j=1}^{N}\varepsilon^{2j-1}g_{2j-1}, \quad h=\sum_{j=1}^{N}\varepsilon^{2j-1}h_{2j-1}, \quad
s=\sum_{j=1}^{N}\varepsilon^{2j-1}s_{2j-1},\quad q=\sum_{j=1}^{N}\varepsilon^{2j-1}q_{2j-1}, \quad f=\sum_{j=0}^{N}\varepsilon^{2j}f_{2j}.
\end{align}
For one-soliton solution, we take $N=1$, i.e.,
\begin{align}
g=\varepsilon g_1, \quad h=\varepsilon h_1, \quad s=\varepsilon s_1, \quad q=\varepsilon q_1, \quad f=1+\varepsilon^2 f_2,\label{expMac}
\end{align}
where
\begin{align}
g_1=e^{\theta_1}, \quad h_1=e^{\theta_2}, \quad s_1=e^{\theta_3}, \quad q_1=e^{\theta_4}, \quad \theta_{i}=k_ix+l_iy+\omega_it+\alpha_i,\quad i=1,2,3,4.
\end{align}
Here $k_i, l_i, \omega_i, \alpha_i, \, i=1,2,3,4$ are constants. We substitute \eqref{expMac} into \eqref{P1Mac}-\eqref{P5Mac} and make the coefficients of $\varepsilon^n$, $n=1,2,3,4$, equal to zero. The coefficients of $\varepsilon^1$ yield the dispersion relations
\begin{equation}\label{dispersion}
\omega_i=-\frac{k_i^2}{a},\quad i=1,2,3,4.
\end{equation}
From the coefficient of $\varepsilon^2$ we have $f_2$ as
\begin{equation}\displaystyle
f_2=\sum_{i=1}^4 \frac{A_i}{2}e^{\theta_i+\bar{\theta}_i},\quad A_i=\frac{\sigma_i}{(k_i+\bar{k}_i)(l_i+\bar{l}_i)},\quad i=1,2,3,4.
\end{equation}
From the coefficients of $\varepsilon^3$ we have
\begin{align*}
&\bigg(1+\frac{a}{\bar{a}}\bigg)(A_1(\bar{k}_1)^2e^{\theta_1+\bar{\theta}_1})-2[A_2(k_1-k_2)(k_2+\bar{k}_2)e^{\theta_2+\bar{\theta}_2}+A_3(k_1-k_3)(k_3+\bar{k}_3)e^{\theta_3+\bar{\theta}_3}\nonumber \\
&+A_4(k_1-k_4)(k_4+\bar{k}_4)e^{\theta_4+\bar{\theta}_4}]+\bigg(1+\frac{a}{\bar{a}}\bigg)[A_2(\bar{k}_2)^2e^{\theta_2+\bar{\theta}_2}+A_3(\bar{k}_3)^2e^{\theta_3+\bar{\theta}_3}+A_4(\bar{k}_4)^2e^{\theta_4+\bar{\theta}_4}]=0.
\end{align*}
To satisfy the above equation, we must have $\bar{a}=-a$ and $k_1=k_2=k_3=k_4$. The coefficients of $\varepsilon^4$ vanish directly under these conditions. Hence we obtain the one-soliton solution of the Maccari system \eqref{M1}-\eqref{M5} as
\begin{align}
&u=\frac{e^{k_1x+l_1y+\omega_1t+\alpha_1}}{1+\frac{e^{(k_1+\bar{k}_1)x+(\omega_1+\bar{\omega}_1)t}}{2(k_1+\bar{k}_1)}\Big(\sum\limits_{i=1}^4 \frac{\sigma_i e^{(l_i+\bar{l}_i)y+\alpha_i+\bar{\alpha}_i}}{(l_i+\bar{l}_i)}\Big)},\,\,\, v=\frac{e^{k_1x+l_2y+\omega_1t+\alpha_2}}{1+\frac{e^{(k_1+\bar{k}_1)x+(\omega_1+\bar{\omega}_1)t}}{2(k_1+\bar{k}_1)}\Big(\sum\limits_{i=1}^4 \frac{\sigma_i e^{(l_i+\bar{l}_i)y+\alpha_i+\bar{\alpha}_i}}{(l_i+\bar{l}_i)}\Big)},\nonumber\\
&w=\frac{e^{k_1x+l_3y+\omega_1t+\alpha_3}}{1+\frac{e^{(k_1+\bar{k}_1)x+(\omega_1+\bar{\omega}_1)t}}{2(k_1+\bar{k}_1)}\Big(\sum\limits_{i=1}^4 \frac{\sigma_i e^{(l_i+\bar{l}_i)y+\alpha_i+\bar{\alpha}_i}}{(l_i+\bar{l}_i)}\Big)}, \,\,\, z=\frac{e^{k_1x+l_4y+\omega_1t+\alpha_4}}{1+\frac{e^{(k_1+\bar{k}_1)x+(\omega_1+\bar{\omega}_1)t}}{2(k_1+\bar{k}_1)}\Big(\sum\limits_{i=1}^4 \frac{\sigma_i e^{(l_i+\bar{l}_i)y+\alpha_i+\bar{\alpha}_i}}{(l_i+\bar{l}_i)}\Big)},\nonumber\\
&\hspace{3.5cm} p=\frac{(k_1+\bar{k}_1)e^{(k_1+\bar{k}_1)x+(\omega_1+\bar{\omega}_1)t}\Big(\sum\limits_{i=1}^4 \frac{\sigma_i e^{(l_i+\bar{l}_i)y+\alpha_i+\bar{\alpha}_i}}{(l_i+\bar{l}_i)}\Big)}{\Big[1+\frac{e^{(k_1+\bar{k}_1)x+(\omega_1+\bar{\omega}_1)t}}{2(k_1+\bar{k}_1)}
\Big(\sum\limits_{i=1}^4 \frac{\sigma_i e^{(l_i+\bar{l}_i)y+\alpha_i+\bar{\alpha}_i}}{(l_i+\bar{l}_i)}\Big)\Big]^2},\nonumber\\
\label{sol-Maccari}
\end{align}
with $\theta_i=k_ix+l_iy+\omega_it+\alpha_i,\, \omega_i=-\frac{k_i^2}{a}, \, i=1,2,3,4$. Here $k_i, l_i, \alpha_i$ are any complex numbers and $a$ is a pure imaginary number.

\section{Soliton solutions of the reduced shifted nonlocal Maccari systems}

In this section we shall use Type 1 and Type 2 approaches \cite{gur1}, \cite{gur3} with the reduction formulas to obtain one-soliton solutions of the reduced shifted nonlocal Maccari systems.

\noindent \textbf{i)}\, $v(x,y,t)=\rho_1 u(\varepsilon_1x+x_0,\varepsilon_2y+y_0,t)$, $z(x,y,t)=\rho_2 \omega(\varepsilon_1x+x_0,\varepsilon_2y+y_0,t)$, $\rho_1^2=\rho_2^2=1$, $x_0,y_0 \in \mathbb{R}$.

  \noindent Let us use Type 1 with this reduction and one-soliton solution (\ref{sol-Maccari}) of the $5$-component Maccari system. We obtain
	\begin{equation}
	k_1=\varepsilon_1 k_1, \quad l_2=\varepsilon_2 l_1, \quad l_4=\varepsilon_2 l_3,\quad e^{\alpha_4}=\rho_1 e^{k_1x_0+l_3y_0+\alpha_3}, \quad e^{\alpha_2}=\rho_1 e^{k_1x_0+l_1y_0+\alpha_1},
    \end{equation}
with
\begin{equation}
 \sigma_2=\varepsilon_1 \varepsilon_2 \sigma_1, \quad \sigma_4=\varepsilon_1 \varepsilon_2 \sigma_3.
 \end{equation}
 It follows that $\varepsilon_1=1$, $x_0=0$, and $\varepsilon _ 2=\pm 1$. For having nonlocal reduction we only have the case $\varepsilon_2=-1$, that is, the case where $(\varepsilon_1,\varepsilon_2,\varepsilon_3)=(1,-1,1)$. The constraints become
\begin{align}
l_2=-l_1,\quad l_4=-l_3,\quad \omega_2=\omega_1,\quad \sigma_2=-\sigma_1, \quad \sigma_4=-\sigma_3.
\end{align}
 Hence one-soliton solution of the real $y$-space shifted nonlocal Maccari system \eqref{RMacy1}-\eqref{RMacy2} is
\begin{equation}
u=\frac{e^{k_1x+l_1y+\omega_1t+\alpha_1}}{Q},\quad w=\frac{e^{k_1x+l_3y+\omega_1t+\alpha_3}}{Q}, \quad p=\frac{A(k_1+\bar{k}_1)e^{(k_1+\bar{k}_1)x+(\omega_1+\bar{\omega}_1)t}}{\Big[1+A\frac{e^{(k_1+\bar{k}_1)x+(\omega_1+\bar{\omega}_1)t}}{2(k_1+\bar{k}_1)}\Big]^2},\label{sol-realy}
\end{equation}
where
{\small\begin{align*}
&Q=1+\frac{e^{(k_1+\bar{k}_1)x+(\omega_1+\bar{\omega}_1)t}}{2(k_1+\bar{k}_1)}
\Big[\sigma_1\frac{e^{\alpha_1+\bar{\alpha}_1}}{(l_1+\bar{l}_1)}\Big(e^{(l_1+\bar{l}_1)y}+e^{(l_1+\bar{l}_1)(-y+y_0)}\Big)+\sigma_3\frac{e^{\alpha_3+\bar{\alpha}_3}}{(l_3+\bar{l}_3)}\Big(e^{(l_3+\bar{l}_3)y}+e^{(l_3+\bar{l}_3)(-y+y_0)}\Big) \Big],\\
&A=\Big[\sigma_1\frac{e^{\alpha_1+\bar{\alpha}_1}}{(l_1+\bar{l}_1)}\Big( e^{(l_1+\bar{l}_1)y}+e^{(l_1+\bar{l}_1)(-y+y_0)}\Big)+\sigma_3\frac{e^{\alpha_3+\bar{\alpha}_3}}{(l_3+\bar{l}_3)}\Big( e^{(l_3+\bar{l}_3)y}+e^{(l_3+\bar{l}_3)(-y+y_0)}\Big)\Big].
\end{align*}}

We now use Type 2 with this reduction and solution (\ref{sol-Maccari}) to obtain solutions for other cases. In this case we have the following constraints for $(\varepsilon_1,\varepsilon_2,\varepsilon_3)=(-1,-1,1)$  as
\begin{align} \label{xycon}
&k_1=\bar{k}_1, \quad \omega_1=-\bar{\omega}_1, \quad l_2=\bar{l}_1,\quad l_4=\bar{l}_3,\quad l_3+\bar{l}_3=l_1+\bar{l}_1, \nonumber \\
&\sigma_2=\sigma_1, \quad \sigma_4=\sigma_3, \quad e^{2\alpha_2}=e^{2\alpha_1+(l_1-\bar{l}_1)y_0},\quad e^{2\alpha_4}=e^{2\alpha_3+(l_3-\bar{l}_3)y_0}.
\end{align}
Hence one-soliton solution of the real $xy$-space shifted nonlocal Maccari system \eqref{RMacxy1}-\eqref{RMacxy2} is
\begin{align}
&u=\frac{e^{k_1x+l_1y+\omega_1t+\alpha_1}}{1+\frac{e^{2k_1x+(l_1+\bar{l}_1)y}}{2k_1(l_1+\bar{l}_1)}\Big(\sigma_1
e^{\alpha_1+\bar{\alpha}_1}+\sigma_3e^{\alpha_3+\bar{\alpha}_3}\Big)}, \quad
w=\frac{e^{k_1x+l_3y+\omega_1t+\alpha_3}}{1+\frac{e^{2k_1x+(l_1+\bar{l}_1)y}}{2k_1(l_1+\bar{l}_1)}
\Big(\sigma_1e^{\alpha_1+\bar{\alpha}_1}+\sigma_3e^{\alpha_3+\bar{\alpha}_3}\Big)},\nonumber\\
&\hspace{3.5cm} p=\frac{\frac{4k_1}{(l_1+\bar{l}_1)}e^{2k_1x+(l_1+\bar{l}_1)y}\Big(\sigma_1e^{\alpha_1+\bar{\alpha}_1}
+\sigma_3e^{\alpha_3+\bar{\alpha}_3}\Big)}{\Big[1+\frac{e^{2k_1x+(l_1+\bar{l}_1)y}}{2k_1(l_1+\bar{l}_1)}
\Big(\sigma_1e^{\alpha_1+\bar{\alpha}_1}+\sigma_3e^{\alpha_3+\bar{\alpha}_3}\Big)\Big]^2}.\nonumber\\
&\label{sol-realxy}
\end{align}

Note that nontrivial solution for the case $(\varepsilon_1,\varepsilon_2,\varepsilon_3)=(-1,1,1)$ cannot be derived with Type 1 or Type 2 approaches. Different solution methods can be used to obtain nontrivial solutions for this case.
\medskip

\noindent \textbf{ii)}\, $v(x,y,t)=\rho_1 \bar{u}(\varepsilon_1x+x_0,\varepsilon_2y+y_0,\varepsilon_3t+t_0)$, $z(x,y,t)=\rho_2 \bar{\omega}(\varepsilon_1x+x_0,\varepsilon_2y+y_0,\varepsilon_3t+t_0)$, $\rho_1^2=\rho_2^2=1$, $x_0,y_0,t_0 \in \mathbb{R}$.

  We first use Type 1 with this reduction and one-soliton solution (\ref{sol-Maccari}). We obtain
	\begin{equation}
	k_1=\varepsilon_1 \bar{k}_1,\,\, l_2=\varepsilon_2 \bar{l}_1,\,\, l_4=\varepsilon_2 \bar{l}_3,\,\, \omega_1=\varepsilon_3 \bar{\omega}_1,\,\, e^{\alpha_2}=\rho_1 e^{\bar{k}_1x_0+\bar{l}_1y_0+\bar{\omega}_1t_0+\bar{\alpha}_1},\,\, e^{\alpha_4}=\rho_1 e^{\bar{k}_1x_0+\bar{l}_3y_0+\bar{\omega}_1t_0+\bar{\alpha}_3}
	\end{equation}
with
 \begin{equation}
          \sigma_2=\varepsilon_1 \varepsilon_2 \sigma_1, \quad \sigma_4=\varepsilon_1 \varepsilon_2 \sigma_3.
 \end{equation}
Here we choose $\varepsilon_1=1$ to avoid trivial solution. Thus we get the following possibilities:
 \begin{align}
    &a)\, \varepsilon_1=1, \varepsilon_2=1, \varepsilon_3=-1,\, x_0=y_0=0,\\
    &b)\, \varepsilon_1=1, \varepsilon_2=-1, \varepsilon_3=1,\, x_0=t_0=0,\\
    &c)\, \varepsilon_1=1, \varepsilon_2=-1, \varepsilon_3=-1,\, x_0=0,
\end{align}
corresponding to the cases $(i)$, $(ii)$, and $(iii)$ in \eqref{CMacpos}. Note that the case b) gives $a=\bar{a}$ but we have previously obtained $a=-\bar{a}$ for the applicability of the Hirota method. Therefore, we cannot find solutions for the case $(\varepsilon_1,\varepsilon_2,\varepsilon_3)=(1,-1,1)$ using the Hirota method.

The constraints for the case  $(\varepsilon_1,\varepsilon_2,\varepsilon_3)=(1,1,-1)$ are given as
\begin{align}
l_2=\bar{l}_1,\,\, l_4=\bar{l}_3,\,\, k_1=\bar{k}_1, \,\, \omega_1=-\bar{\omega}_1,\,\, \sigma_2=\sigma_1,\,\,  \sigma_4=\sigma_3,\,\,
e^{\alpha_2}=\rho_1 e^{\bar{\omega}_1t_0+\bar{\alpha}_1},\,\, e^{\alpha_4}=\rho_1 e^{\bar{\omega}_1t_0+\bar{\alpha}_3}.
\end{align}
Hence one-soliton solution of the complex reverse time shifted nonlocal Maccari system \eqref{CMact1}-\eqref{CMact2} is
{\small\begin{align}
&u=\frac{e^{k_1x+l_1y+\omega_1t+\alpha_1}}{1+\frac{e^{2k_1x}}{2k_1}\Big(\sigma_1\frac{e^{(l_1+\bar{l}_1)y+\alpha_1
+\bar{\alpha}_1}}{(l_1+\bar{l}_1)}+\sigma_3\frac{e^{(l_3+\bar{l}_3)y+\alpha_3+\bar{\alpha}_3}}{(l_3+\bar{l}_3)}\Big)}, \,
w=\frac{e^{k_1x+l_3y+\omega_1t+\alpha_3}}{1+\frac{e^{2k_1x}}{2k_1}\Big(\sigma_1\frac{e^{(l_1+\bar{l}_1)y
+\alpha_1+\bar{\alpha}_1}}{(l_1+\bar{l}_1)}+\sigma_3\frac{e^{(l_3+\bar{l}_3)y+\alpha_3+\bar{\alpha}_3}}{(l_3+\bar{l}_3)}\Big)},\nonumber\\
&\hspace{3.5cm} p=\frac{2k_1e^{2k_1x}\Big(\sigma_1\frac{ e^{(l_1+\bar{l}_1)y+\alpha_1+\bar{\alpha}_1}}{(l_1+\bar{l}_1)}+\sigma_3\frac{ e^{(l_3+\bar{l}_3)y+\alpha_3+\bar{\alpha}_3}}{(l_3+\bar{l}_3)}\Big)}{\Big[1+\frac{e^{2k_1x}}{2k_1}\Big(\sigma_1
\frac{e^{(l_1+\bar{l}_1)y+\alpha_1+\bar{\alpha}_1}}{(l_1+\bar{l}_1)}+\sigma_3\frac{e^{(l_3+\bar{l}_3)y+\alpha_3+\bar{\alpha}_3}}{(l_3+\bar{l}_3)}\Big)\Big]^2}.\nonumber\\
&\label{sol-compt}
\end{align}}
\smallskip

We get the following constraints for the case $(\varepsilon_1,\varepsilon_2,\varepsilon_3)=(1,-1,-1)$:
\begin{align}
l_2=-\bar{l}_1,\,\, l_4=-\bar{l}_3,\,\, k_1=\bar{k}_1,\,\, \omega_1=-\bar{\omega}_1,\,\,  e^{\alpha_2}=\rho_1 e^{\bar{l}_1y_0+\bar{\omega}_1t_0+\bar{\alpha}_1},\,\, e^{\alpha_4}=\rho_1 e^{\bar{l}_3y_0+\bar{\omega}_1t_0+\bar{\alpha}_3},
\end{align}
with $\sigma_2=-\sigma_1,\, \sigma_4=-\sigma_3$. Therefore, we obtain one-soliton solution of the complex reverse $y$-space-time shifted nonlocal Maccari system \eqref{CMacyt1}-\eqref{CMacyt2} as
\begin{align}
&u =\frac{e^{k_1x+l_1y+\omega_1t+\alpha_1}}{1+\frac{e^{2k_1x}}{4k_1}\Big[\sigma_1\frac{e^{\alpha_1+\bar{\alpha}_1}}{(l_1+\bar{l}_1)}\Big( e^{(l_1+\bar{l}_1)y}+e^{(l_1+\bar{l}_1)(-y+y_0)}\Big)+\sigma_3\frac{e^{\alpha_3+\bar{\alpha}_3}}{(l_3+\bar{l}_3)}\Big( e^{(l_3+\bar{l}_3)y}+e^{(l_3+\bar{l}_3)(-y+y_0)}\Big)\Big]}, \nonumber\\
&w=\frac{e^{k_1x+l_3y+\omega_1t+\alpha_3}}{1+\frac{e^{2k_1x}}{4k_1}\Big[\sigma_1\frac{e^{\alpha_1+\bar{\alpha}_1}}{(l_1+\bar{l}_1)}\Big( e^{(l_1+\bar{l}_1)y}+e^{(l_1+\bar{l}_1)(-y+y_0)}\Big) +\sigma_3\frac{e^{\alpha_3+\bar{\alpha}_3}}{(l_3+\bar{l}_3)}\Big( e^{(l_3+\bar{l}_3)y}+e^{(l_3+\bar{l}_3)(-y+y_0)}\Big)\Big]}, \nonumber\\
&p=\frac{2k_1e^{2k_1x}\Big[\sigma_1\frac{e^{\alpha_1+\bar{\alpha}_1}}{(l_1+\bar{l}_1)}\Big( e^{(l_1+\bar{l}_1)y}+e^{(l_1+\bar{l}_1)(-y+y_0)}\Big)+\sigma_3\frac{e^{\alpha_3+\bar{\alpha}_3}}{(l_3+\bar{l}_3)}\Big( e^{(l_3+\bar{l}_3)y}+e^{(l_3+\bar{l}_3)(-y+y_0)}\Big)\Big]}{\Big[1+\frac{e^{2k_1x}}{4k_1}\Big(\sigma_1\frac{e^{\alpha_1+\bar{\alpha}_1}}{(l_1+\bar{l}_1)}\Big( e^{(l_1+\bar{l}_1)y}+e^{(l_1+\bar{l}_1)(-y+y_0)}\Big) +\sigma_3\frac{e^{\alpha_3+\bar{\alpha}_3}}{(l_3+\bar{l}_3)}\Big( e^{(l_3+\bar{l}_3)y}+e^{(l_3+\bar{l}_3)(-y+y_0)}\Big)\Big)\Big]^2}.\nonumber\\
&\label{sol-compyt}
\end{align}

Use now Type 2 approach. We obtain the constraints for the case $(\varepsilon_1,\varepsilon_2,\varepsilon_3)=(-1,-1,-1)$ as
\begin{align}
&l_2=l_1,\quad l_4=l_3,\quad l_3+\bar{l}_3=l_1+\bar{l}_1,\quad \sigma_2=\sigma_1, \quad \sigma_4=\sigma_3, \nonumber \\ &e^{2\alpha_2}=e^{(\bar{k}_1-k_1)x_0+(\bar{l}_1-l_1)y_0+(\bar{\omega}_1-\omega_1)t_0+2\alpha_1}, \quad e^{2\alpha_4}=e^{(\bar{k}_1-k_1)x_0+(\bar{l}_3-l_3)y_0+(\bar{\omega}_1-\omega_1)t_0+2\alpha_3}.
\end{align}
Therefore, one-soliton solution of the complex reverse $xy$-space-time shifted nonlocal Maccari system \eqref{CMacxyt1}-\eqref{CMacxyt2} is
\begin{align}
&u(x,y,t)=\frac{e^{k_1x+l_1y+\omega_1t+\alpha_1}}{1+\frac{e^{(k_1+\bar{k}_1)x+(l_1+\bar{l}_1)y+(\omega_1+\bar{\omega}_1)t}}{(k_1+\bar{k}_1)(l_1+\bar{l}_1)}\big(\sigma_1e^{\alpha_1+\bar{\alpha}_1}+\sigma_3e^{\alpha_3+\bar{\alpha}_3}\big)}, \nonumber \\
&w(x,y,t)=\frac{e^{k_1x+l_3y+\omega_1t+\alpha_3}}{1+\frac{e^{(k_1+\bar{k}_1)x+(l_1+\bar{l}_1)y+(\omega_1+\bar{\omega}_1)t}}{(k_1+\bar{k}_1)(l_1+\bar{l}_1)}\big(\sigma_1e^{\alpha_1+\bar{\alpha}_1}+\sigma_3e^{\alpha_3+\bar{\alpha}_3}\big)}, \nonumber\\
&p(x,y,t)=\frac{2\frac{(k_1+\bar{k}_1)}{(l_1+\bar{l}_1)}e^{(k_1+\bar{k}_1)x+(l_1+\bar{l}_1)y+(\omega_1+\bar{\omega}_1)t}\bigg(\sigma_1 e^{\alpha_1+\bar{\alpha}_1}+\sigma_3 e^{\alpha_3+\bar{\alpha}_3}\bigg)}{\bigg[1+\frac{e^{(k_1+\bar{k}_1)x+(l_1+\bar{l}_1)y+(\omega_1+\bar{\omega}_1)t}}{(k_1+\bar{k}_1)(l_1+\bar{l}_1)}\big(\sigma_1e^{\alpha_1+\bar{\alpha}_1}+\sigma_3e^{\alpha_3+\bar{\alpha}_3}\big)\bigg]^2}.\nonumber\\
&\label{sol-compxyt}
\end{align}
\smallskip

\noindent By applying Type 2, we get the case $(\varepsilon_1,\varepsilon_2,\varepsilon_3)=(-1,1,-1)$ but the solution is trivial.
\smallskip

Let $k_1=\delta_1+i\beta_1,\,  l_1=\delta_2+i\beta_2, \, \omega_1=\delta_3+i\beta_3, \, l_3=\delta_4+i\beta_4,\, e^{\alpha_1}=\delta_5+i\beta_5,\, e^{\alpha_3}=\delta_6+i\beta_6$, and $a=i\delta_7$ in the solution \eqref{sol-compxyt}. We have $\delta_3=-\frac{2\delta_1\beta_1}{\delta_7}$, $\beta_3=\frac{\delta_1^2-\beta_1^2}{\delta_7}$ from the dispersion relation $\omega_1=-\frac{k^2_1}{a}$, and $\delta_2=\delta_4$ from ${l_1+\bar{l}_1=l_3+\bar{l}_3}$. Then, the solution \eqref{sol-compxyt} of the complex reverse $xy$-space-time shifted nonlocal Maccari system \eqref{CMacxyt1}-\eqref{CMacxyt2} becomes
\begin{align}\label{xi3}
&|u|^2 = \frac{e^{2\delta_1x+2\delta_2y+2\delta_3t}(\delta_5^2+\beta_5^2)}{\Big(1
+\frac{e^{2\delta_1x+2\delta_2y+2\delta_3t}}{4\delta_1\delta_2}[\sigma_1(\delta_5^2+\beta_5^2)+\sigma_3(\delta_6^2+\beta_6^2)]\Big)^2}
,\nonumber\\
& |w|^2 = \frac{e^{2\delta_1x+2\delta_2y+2\delta_3t}(\delta_6^2+\beta_6^2)}{\Big(1+\frac{e^{2\delta_1x+2\delta_2y+2\delta_3t}}{4\delta_1\delta_2}
[\sigma_1(\delta_5^2+\beta_5^2)+\sigma_3(\delta_6^2+\beta_6^2)]\Big)^2} 
, \nonumber\\
&p=\frac{4\delta_1e^{2\delta_1x+2\delta_2y+2\delta_3t}[\sigma_1(\delta_5^2+\beta_5^2)
+\sigma_3(\delta_6^2+\beta_6^2)]}{\Big(1+\frac{e^{2\delta_1x+2\delta_2y+2\delta_3t}}{4\delta_1\delta_2}
[\sigma_1(\delta_5^2+\beta_5^2)+\sigma_3(\delta_6^2+\beta_6^2)]\Big)^2}.\nonumber\\&
\end{align}
The above solution is nonsingular if $\frac{\sigma_1(\delta_5^2+\beta_5^2)+\sigma_3(\delta_6^2+\beta_6^2)}{\delta_1\delta_2}>0$. Let us give a particular example.

\begin{ex}
Take the parameters of the solution \eqref{xi3} as $\sigma_1=1,\sigma_3=-1,\delta_1=1,\beta_1=-1,\delta_7=1,\delta_2=\frac{1}{2},\delta_5=\beta_5=\sqrt{2},\delta_6=\beta_6=1$. We have one-soliton solution of the complex reverse $xy$-space-time shifted nonlocal Maccari system \eqref{CMacxyt1}-\eqref{CMacxyt2} as
\begin{equation}
|u|^2=\frac{4e^{2x+y+4t}}{\Big(1+e^{2x+y+4t}\Big)^2},\quad |w|^2=\frac{2e^{2x+y+4t}}{\Big(1+e^{2x+y+4t}\Big)^2},\quad p=\frac{8e^{2x+y+4t}}{\Big(1+e^{2x+y+4t}\Big)^2}.
\end{equation}
Here, the forms of the solutions $|u|^2$, $|w|^2$, and $p$ are similar, so only $|u|^2$ is plotted in Figure 1.
\begin{center}
\begin{figure}[h]
\centering
\begin{minipage}[t]{1\linewidth}
\centering
\includegraphics[angle=0,scale=.47]{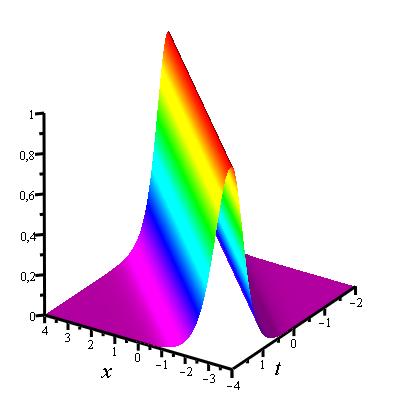}
\caption{Soliton solution $|u|^2$ of the system \eqref{CMacxyt1}-\eqref{CMacxyt2} for $y=0$.}
\end{minipage}%
\end{figure}
\end{center}
\end{ex}
\squeezeup

\section{Soliton solutions of the reduced shifted nonlocal Maccari equations}

In Section 5, we have obtained one-soliton solutions for the reduced shifted nonlocal Maccari systems. Now, we shall use these solutions with the reductions
\begin{align}
    &1)\, w=\rho_3u(\mu_1x+X_0,\mu_2y+Y_0,t),\quad \rho_3^2=\mu_1^2=\mu_2^2=1,\quad X_0, Y_0\in \mathbb{R}, \label{mui1} \\
    &2)\, w=\rho_3\bar{u}(\mu_1x+X_0,\mu_2y+Y_0,-t+T_0),\quad \rho_3^2=\mu_1^2=\mu_2^2=1,\quad X_0, Y_0, T_0\in \mathbb{R}, \label{mui2}
  \end{align}
to obtain solutions of the reduced shifted nonlocal Maccari equations.

  \subsection{One-soliton solutions of the shifted nonlocal equations reduced from the real $y$-space Maccari system \eqref{RMacy1}-\eqref{RMacy2}}

Let us use Type 1 approach with the reduction formula \eqref{mui1} and the solution (\ref{sol-realy}) of the real $y$-space Maccari system \eqref{RMacy1}-\eqref{RMacy2}. We obtain
	\begin{align}
	k_1=\mu_1 k_1, \quad l_3=\mu_2 l_1, \quad e^{\alpha_3}=\rho_3 e^{k_1X_0+l_1Y_0+\alpha_1}.
	\end{align}
It follows that
\begin{align}
    &a)\, \mu_1=1,\, X_0=0,\, \mu_2=1,\, Y_0=0, \\
    &b)\, \mu_1=1,\, X_0=0,\, \mu_2=-1.
\end{align}
Both of these constraints give the Maccari equation \eqref{RRy1}. We get the following constraints for $a)$ and $b)$, respectively as
$$l_3=l_1, \quad Y_0=0, \quad e^{\alpha_3}=\rho_3e^{\alpha_1},$$ 
and
$$l_3=-l_1,\quad y_0=Y_0, \quad e^{\alpha_3}=\rho_3e^{l_1y_0+\alpha_1},$$ 
 under the condition \eqref{A1con}. Hence one-soliton solution of the shifted nonlocal Maccari equation \eqref{RRy1} is
\begin{align}\label{solnmac1}
&u=\frac{e^{k_1x+l_1y+\omega_1t+\alpha_1}}{1+\sigma_1\frac{e^{(k_1+\bar{k}_1)x+(\omega_1+\bar{\omega}_1)t+\alpha_1+\bar{\alpha}_1}}{(k_1+\bar{k}_1)(l_1+\bar{l}_1)}
\Big( e^{(l_1+\bar{l}_1)y}+e^{(l_1+\bar{l}_1)(-y+y_0)}\Big)},  \\
&p=\frac{\sigma_1\frac{2(k_1+\bar{k}_1)}{(l_1+\bar{l}_1)}e^{(k_1+\bar{k}_1)x+(\omega_1+\bar{\omega}_1)t+\alpha_1+\bar{\alpha}_1}\Big( e^{(l_1+\bar{l}_1)y}+e^{(l_1+\bar{l}_1)(-y+y_0)}\Big)}{\Big[{1+\sigma_1\frac{e^{(k_1+\bar{k}_1)x+(\omega_1+\bar{\omega}_1)t+\alpha_1+\bar{\alpha}_1}}{(k_1+\bar{k}_1)(l_1
+\bar{l}_1)}\Big( e^{(l_1+\bar{l}_1)y}+e^{(l_1+\bar{l}_1)(-y+y_0)}\Big)}\Big]^2}.
\end{align}
Note that if we use Type 2 with the reduction \eqref{mui1}, we obtain trivial solution.

Now we use the reduction \eqref{mui2} with the one-soliton solution (\ref{sol-realy}). Using Type 1 gives
	\begin{align}
	k_1=\mu_1 \bar{k}_1, \quad l_3=\mu_2 \bar{l}_1, \quad \omega_1=-\bar{\omega}_1,
\quad e^{\alpha_3}=\rho_3 e^{\bar{k}_1X_0+\bar{l}_1Y_0+\bar{\omega}_1T_0+\bar{\alpha}_1}.
	\end{align}
It follows that
\begin{align}
    &a)\, \mu_1=1,\, X_0=0, \, \mu_2=1,\, Y_0=0, \\
    &b)\, \mu_1=1,\, X_0=0, \, \mu_2=-1.
\end{align}
Both of these constraints give the shifted nonlocal Maccari equation \eqref{CRy3}. We get the following constraints for $a)$ and $b)$, respectively:
$$l_3=\bar{l}_1,\quad Y_0=0, \quad e^{\alpha_3}=\rho_3e^{\bar{\omega}_1T_0+\bar{\alpha}_1},$$ 
and
$$l_3=-\bar{l}_1,\quad y_0=Y_0, \quad e^{\alpha_3}=\rho_3e^{\bar{l}_1y_0+\bar{\omega}_1T_0+\bar{\alpha}_1},$$ 
 under the condition \eqref{B1con}. Thus, one-soliton solution of the shifted nonlocal Maccari equation \eqref{CRy3} is
{\small\begin{equation}\label{solnmac2}
u=\frac{e^{k_1x+l_1y+\omega_1t+\alpha_1}}{1+\sigma_1\frac{e^{2k_1x+\alpha_1+\bar{\alpha}_1}}{(2k_1)(l_1+\bar{l}_1)}\Big( e^{(l_1+\bar{l}_1)y}+e^{(l_1+\bar{l}_1)(-y+y_0)}\Big)},\,\, p=\frac{\sigma_1\frac{4k_1}{(l_1+\bar{l}_1)}e^{2k_1x+\alpha_1+\bar{\alpha}_1}\Big( e^{(l_1+\bar{l}_1)y}+e^{(l_1+\bar{l}_1)(-y+y_0)}\Big)}{\Big[{1+\sigma_1\frac{e^{2k_1x+\alpha_1+\bar{\alpha}_1}}{2k_1(l_1+\bar{l}_1)}\Big( e^{(l_1+\bar{l}_1)y}+e^{(l_1+\bar{l}_1)(-y+y_0)}\Big)}\Big]^2}.
\end{equation}}
Note that if we use Type 2 approach with the reduction \eqref{mui2} we get trivial solution.

 Let $k_1=\delta_1+i\beta_1, \,  l_1=\delta_2+i\beta_2, \,
\omega_1=\delta_3+i\beta_3, \, e^{\alpha_1}=\delta_4+i\beta_4$, and $a=i\beta_5$ in the solution \eqref{solnmac2}. We have $\delta_3=-\frac{2\delta_1\beta_1}{\beta_5}$, $\beta_3=\frac{\delta_1^2-\beta_1^2}{\beta_5}$ from the dispersion relation $\omega_1=-\frac{k^2_1}{a}$. Then, the solution \eqref{solnmac2} becomes
\begin{equation}\label{xi4}
|u|^2 = \frac{e^{2\delta_1x+2\delta_2y+2\delta_3t}(\delta_4^2+\beta_4^2)}{\bigg(1+\sigma_1\frac{(\delta_4^2+\beta_4^2)}{4\delta_2\sqrt{\delta_1^2+\beta_1^2}}e^{2\delta_1x}[e^{2\delta_2y}+e^{2\delta_2(-y+y_0)}]\bigg)^2}
, \,\, p=\frac{\sigma_1\frac{2\delta_1(\delta_4^2+\beta_4^2)}{\delta_2}e^{2\delta_1x}[e^{2\delta_2y}+e^{2\delta_2(-y+y_0)}]}{\bigg(1+\sigma_1\frac{(\delta_4^2+\beta_4^2)}{4\delta_1\delta_2}e^{2\delta_1x}[e^{2\delta_2y}+e^{2\delta_2(-y+y_0)}]\bigg)^2}.
\end{equation}
The above solution is singular for
\begin{align}
    2\beta_1x-\arctan\Big(\frac{\beta_1}{\delta_1}\Big)=n\pi, \quad \sigma_1\frac{(\delta_4^2+\beta_4^2)}{4\delta_2\sqrt{\delta_1^2+\beta_1^2}}e^{2\delta_1x}[e^{2\delta_2y}+e^{2\delta_2(-y+y_0)}]+(-1)^n=0,\quad n\in\mathbb{N}.
\end{align}

If $k_1 \in \mathbb{R}$ and $\frac{\sigma_1}{\delta_2}>0$, giving $\beta_1=0$, $\delta_3=0$ in (\ref{xi4}), we get nonsingular solution for  the shifted nonlocal  Maccari equation \eqref{CRy3} as
\begin{align}
|u|^2 = \frac{e^{2\delta_1x+2\delta_2y}(\delta_4^2+\beta_4^2)}{\bigg(1+\sigma_1\frac{(\delta_4^2+\beta_4^2)}{4\delta_2|\delta_1|}e^{2\delta_1x}[e^{2\delta_2y}+e^{2\delta_2(-y+y_0)}]\bigg)^2}, \quad &p=\frac{\sigma_1\frac{2\delta_1(\delta_4^2+\beta_4^2)}{\delta_2}e^{2\delta_1x}[e^{2\delta_2y}+e^{2\delta_2(-y+y_0)}]}{\bigg(1+\sigma_1\frac{(\delta_4^2+\beta_4^2)}{4\delta_1\delta_2}e^{2\delta_1x}[e^{2\delta_2y}+e^{2\delta_2(-y+y_0)}]\bigg)^2}.
\end{align}
Let us give a particular example.

\begin{ex}
Take the parameters of the solution \eqref{xi4} as $\sigma_1=-1,\delta_1=1,\beta_1=0,\delta_3=0,\delta_2=-1, \delta_4=\beta_4=\sqrt{2}, \beta_5=\frac{1}{2}, y_0=1$. We have one-soliton solution of the shifted nonlocal Maccari equation \eqref{CRy3} as
\begin{equation}
|u|^2=\frac{4e^{2x-2y}}{\Big(1+e^{2x-2y}+e^{2x-2y-2}\Big)^2},\quad p=\frac{8e^{2x}\big(e^{-2y}+e^{2y-2}\big)}{\Big(1+e^{2x-2y}+e^{2x-2y-2}\Big)^2}.
\end{equation}
The above solutions $|u|^2$ and $p$ are plotted in Figure 2 and Figure 3, respectively.
\end{ex}

\begin{center}
\begin{figure}[ht]
\centering
\begin{minipage}[t]{0.41\linewidth}
\centering
\includegraphics[angle=0,scale=.47]{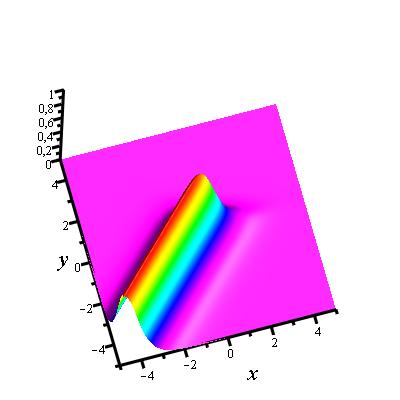}
\caption{Solitoff solution $|u|^2$ of \\ the equation \eqref{CRy3}}
\end{minipage}
\hspace{1.8cm}
\begin{minipage}[t]{0.41\linewidth}
\centering
\includegraphics[angle=0,scale=.47]{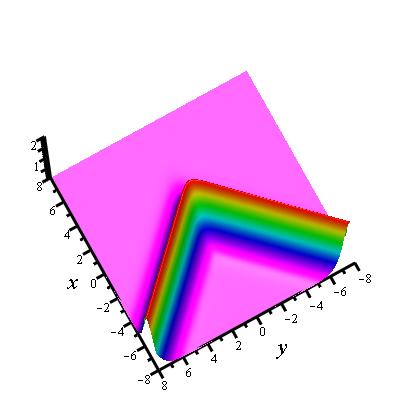}
\caption{V-type wave solution $p$ of \\ the equation \eqref{CRy3}}
\end{minipage}
\end{figure}
\end{center}

\subsection{One-soliton solution of the shifted nonlocal equations reduced from the real $xy$-space Maccari system \eqref{RMacxy1}-\eqref{RMacxy2}}

We first use Type 1 approach with the reduction formula \eqref{mui1} and one-soliton solution (\ref{sol-realxy}) of the system \eqref{RMacxy1}-\eqref{RMacxy2}. We obtain
\begin{align}
	k_1=\mu_1 k_1, \quad l_3=\mu_2 l_1, \quad e^{\alpha_3}=\rho_3 e^{k_1X_0+l_1Y_0+\alpha_1}.
	\end{align}
It follows that
\begin{align}
    &a)\, \mu_1=1,\, X_0=0, \mu_2=1,\, Y_0=0, \label{RRxy.a}\\
    &b)\, \mu_1=1,\, X_0=0, \mu_2=-1.   \label{RRxy.b}
\end{align}
The case \eqref{RRxy.a} gives the shifted nonlocal Maccari equation \eqref{RRxy1} with the constraints
$$l_3=l_1, \quad Y_0=0, \quad e^{\alpha_3}=\rho_3e^{\alpha_1},$$ 
and the condition \eqref{A1con}. We have the solution of the equation \eqref{RRxy1} as
\begin{equation}\label{solnmac3}
u=\frac{e^{k_1x+l_1y+\omega_1t+\alpha_1}}{1+\sigma_1\frac{e^{2k_1x+(l_1+\bar{l}_1)y+\alpha_1+\bar{\alpha}_1}}{k_1(l_1+\bar{l}_1)}},\quad p=\frac{\sigma_1\frac{8k_1}{(l_1+\bar{l}_1)}e^{2k_1x+(l_1+\bar{l}_1)y+\alpha_1+\bar{\alpha}_1}}{\Big[1+\sigma_1\frac{e^{2k_1x+(l_1+\bar{l}_1)y+\alpha_1+\bar{\alpha}_1}}
{k_1(l_1+\bar{l}_1)}\Big]^2}.
\end{equation}
The case \eqref{RRxy.b} gives trivial solution. Note that if we use Type 2 with the reduction \eqref{mui1} here, we also get trivial solution.

Let us now use Type 1 with the complex reduction \eqref{mui2} and one-soliton solution (\ref{sol-realxy}). We obtain
\begin{align}
	k_1=\mu_1 k_1, \quad l_3=\mu_2 \bar{l}_1, \quad \omega_1=-\bar{\omega}_1, \quad e^{\alpha_3}=\rho_3 e^{\bar{k}_1X_0+\bar{l}_1Y_0+\bar{\omega}_1T_0+\bar{\alpha}_1}. \hfill
\end{align}
It follows that
\begin{align}
    &a)\, \mu_1=1,\, X_0=0, \, \mu_2=1,\, Y_0=0, \label{RCxy.a1}\\
    &b)\, \mu_1=1,\, X_0=0, \, \mu_2=-1.   \label{RCxy.b1}
\end{align}
The case \eqref{RCxy.a1} gives the Maccari equation \eqref{CRxy3} with the constraints
$$l_3=\bar{l}_1, \quad Y_0=0, \quad e^{\alpha_3}=\rho_3e^{\bar{\omega}_1T_0+\bar{\alpha}_1},$$ 
 and the condition \eqref{B1con}. Hence one-soliton solution of the shifted nonlocal Maccari equation \eqref{CRxy3} is the solution \eqref{solnmac3}.
 The case \eqref{RCxy.b1} gives trivial solution.

We now use Type 2 approach with the reduction \eqref{mui2} and solution (\ref{sol-realxy}). We obtain the constraints
	\begin{align}
	-k_1=\mu_1 k_1, \quad l_3=-\mu_2 l_1, \quad \omega_1=-\bar{\omega}_1, \quad e^{\alpha_3}=\rho_3 e^{\bar{k}_1X_0+\bar{l}_1Y_0+\bar{\omega}_1T_0+\bar{\alpha}_1}.
	\end{align}

It follows that
\begin{align}
    &a)\, \mu_1=-1, \, \mu_2=1,\, Y_0=0, \label{RCxy.a2}\\
    &b)\, \mu_1=-1, \, \mu_2=-1.   \label{RCxy.b2}
\end{align}

The case \eqref{RCxy.a2} leads to the  Maccari equation \eqref{CRxy4} having trivial solution. Similarly, the case \eqref{RCxy.b2} leads to the Maccari equation \eqref{CRxy3} having the solution \eqref{solnmac3}.

\subsection{One-soliton solutions of the shifted nonlocal equations reduced from the complex reverse time Maccari system \eqref{CMact1}-\eqref{CMact2}}

Use Type 1 with the reduction formula \eqref{mui1} and one-soliton solution (\ref{sol-compt}) of the complex reverse time Maccari system \eqref{CMact1}-\eqref{CMact2}. We obtain
\begin{align}
	k_1=\mu_1 k_1, \quad l_3=\mu_2 l_1, \quad e^{\alpha_3}=\rho_3 e^{k_1X_0+l_1Y_0+\alpha_1},
	\end{align}
giving
\begin{align}
    &a)\, \mu_1=1,\, X_0=0,\, \mu_2=1,\, Y_0=0, \label{CRt.a1}\\
    &b)\, \mu_1=1,\, X_0=0,\, \mu_2=-1.   \label{CRt.b1}
\end{align}

The case \eqref{CRt.a1} gives the shifted nonlocal Maccari equation \eqref{RCt1} with the constraints
$$l_3=l_1, \quad Y_0=0, \quad e^{\alpha_3}=\rho_3e^{\alpha_1},$$
and the condition \eqref{A4con}. Here, we have the solution \eqref{solnmac3}.

The case \eqref{CRt.b1} gives the shifted nonlocal Maccari equation \eqref{RCt2} with the constraints
$$l_3=-l_1, \quad e^{\alpha_3}=\rho_3e^{l_1Y_0+\alpha_1},$$
and the condition \eqref{A4con}. Hence, we have the solution \eqref{solnmac2} here.

Let us now use Type 2 approach with the reduction \eqref{mui1} and one-soliton solution (\ref{sol-compt}). We get
\begin{align}
	k_1=-\mu_1 k_1, \quad l_3=-\mu_2 \bar{l}_1, \quad e^{2\alpha_3}=e^{(l_1-\bar{l}_1)Y_0+2\alpha_1},
	\end{align}
yielding
\begin{align}
    &a)\, \mu_1=-1,\, \mu_2=1,\, Y_0=0, \label{CRt.a2}\\
    &b)\, \mu_1=-1,\, \mu_2=-1.   \label{CRt.b2}
\end{align}
The case \eqref{CRt.a2} gives trivial solution. On the other hand, the case \eqref{CRt.b2} corresponds to the shifted nonlocal  Maccari equation \eqref{RCt4} with $l_3=\bar{l}_1$ and the condition \eqref{A4con}. Here, we again get the solution \eqref{solnmac3}.

We now use Type 1 approach with the reduction \eqref{mui2} and the solution (\ref{sol-compt}). We obtain
\begin{align}
	k_1=\mu_1 k_1, \quad l_3=\mu_2 \bar{l}_1, \quad \omega_1=-\bar{\omega}_1, \quad e^{\alpha_3}=\rho_3 e^{\bar{k}_1X_0+\bar{l}_1Y_0+\bar{\omega}_1T_0+\bar{\alpha}_1},
\end{align}
giving
\begin{align}
    &a)\, \mu_1=1,\, X_0=0, \, \mu_2=1,\, Y_0=0, \label{CCt.a1}\\
    &b)\, \mu_1=1,\, X_0=0, \, \mu_2=-1.   \label{CCt.b1}
\end{align}
The case \eqref{CCt.a1} corresponds to the shifted nonlocal Maccari equation \eqref{CCt1} with the constraints
$$l_3=\bar{l}_1, \quad Y_0=0, \quad e^{\alpha_3}=\rho_3e^{\bar{\omega}_1T_0+\bar{\alpha}_1},$$
 and the condition \eqref{B4con}. Note that the Maccari equations \eqref{CCt1} and \eqref{RCt1} are the same, i.e., the solution for the shifted nonlocal Maccari equation \eqref{CCt1} is given by \eqref{solnmac3}.

The case \eqref{CCt.b1} yields the Maccari equation \eqref{CCt2} with the constraints
$$l_3=-\bar{l}_1, \quad e^{\alpha_3}=\rho_3e^{\bar{l}_1Y_0+\bar{\omega}_1T_0+\bar{\alpha}_1},$$
and the condition \eqref{B4con}. Notice that the Maccari equations \eqref{CCt2} and \eqref{RCt2} are the same, that is the solution for the shifted nonlocal Maccari equation \eqref{CCt2} is given by \eqref{solnmac2}.

Now we use Type 2 with the reduction \eqref{mui2} and the solution (\ref{sol-compt}). We obtain
	\begin{align}
	-k_1=\mu_1 \bar{k}_1, \quad l_3=-\mu_2 l_1, \quad \omega_1=-\bar{\omega}_1, \quad e^{2\alpha_3}=e^{(\bar{l}_1-l_1)Y_0+2\bar{\omega}_1T_0+2\bar{\alpha}_1},
	\end{align}
yielding
\begin{align}
    &a)\, \mu_1=-1,\, \mu_2=1,\, Y_0=0, \label{CCt.a2}\\
    &b)\, \mu_1=-1,\, \mu_2=-1.   \label{CCt.b2}
\end{align}

The case \eqref{CCt.a2} yields trivial solution. On the other hand, the case \eqref{CCt.b2} corresponds to the shifted nonlocal  Maccari equation \eqref{CCt4} with the conditions $l_3=l_1$ and \eqref{B4con}. Note that the Maccari equations \eqref{CCt4} and \eqref{RCt4} are the same. Hence, the solution for the shifted nonlocal Maccari equation \eqref{CCt4} is given by \eqref{solnmac3}.

\subsection{One-soliton solutions of the shifted nonlocal equations reduced from the complex reverse $y$-space time Maccari system \eqref{CMacyt1}-\eqref{CMacyt2}}

We first use Type 1 approach with the reduction \eqref{mui1} and one-soliton solution (\ref{sol-compyt}) of the complex reverse $y$-space time Maccari system \eqref{CMacyt1}-\eqref{CMacyt2}. We get
\begin{align}
	k_1=\mu_1 k_1, \quad l_3=\mu_2 l_1, \quad e^{\alpha_3}=\rho_3 e^{k_1X_0+l_1Y_0+\alpha_1}, \hfill
	\end{align}
giving
\begin{align}
    &a)\, \mu_1=1,\, X_0=0, \mu_2=1,\, Y_0=0, \label{CRyt.a1}\\
    &b)\, \mu_1=1,\, X_0=0, \mu_2=-1.   \label{CRyt.b1}
\end{align}
The case \eqref{CRyt.a1}  corresponds to the Maccari equation \eqref{RCyt1} with the constraints
$l_3=l_1, Y_0=0, e^{\alpha_3}=\rho_3e^{\alpha_1}$, and the condition \eqref{A4con}. On the other hand, the case \eqref{CRyt.b1} gives the equation \eqref{RCyt2} with the constraints $l_3=-l_1, e^{\alpha_3}=\rho_3e^{l_1Y_0+\alpha_1}$, and the condition \eqref{A4con}. Here, both of these equations have the same solution \eqref{solnmac2}.

Note that if we use Type 2 with the reduction \eqref{mui1} and the solution (\ref{sol-compyt}) we get trivial solution.

We now use Type 1 with the complex reduction \eqref{mui2} and one-soliton solution (\ref{sol-compyt}). We have
\begin{align}
	k_1=\mu_1 k_1, \quad l_3=\mu_2 \bar{l}_1, \quad \omega_1=-\bar{\omega}_1, \quad e^{\alpha_3}=\rho_3 e^{\bar{k}_1X_0+\bar{l}_1Y_0+\bar{\omega}_1T_0+\bar{\alpha}_1},
\end{align}
giving
\begin{align}
    &a)\, \mu_1=1,\, X_0=0, \, \mu_2=1,\, Y_0=0, \label{CCyt.a2}\\
    &b)\, \mu_1=1,\, X_0=0, \, \mu_2=-1.   \label{CCyt.b2}
\end{align}
Under the condition \eqref{B4con}, the cases \eqref{CCyt.a2} and \eqref{CCyt.b2} give the shifted nonlocal Maccari equations \eqref{CCyt2} and \eqref{CCyt1}, respectively. For the equation \eqref{CCyt2} we have the constraints
$l_3=\bar{l}_1, Y_0=0,  e^{\alpha_3}=\rho_3e^{\bar{\omega}_1T_0+\bar{\alpha}_1}$, while for the equation \eqref{CCyt1} the constraints are $l_3=-\bar{l}_1, e^{\alpha_3}=\rho_3e^{\bar{l}_1Y_0+\bar{\omega}_1T_0+\bar{\alpha}_1}$. We get the same solution \eqref{solnmac2} for both equations.

Note that if we use Type 2 approach with the reduction \eqref{mui2} and solution (\ref{sol-compyt}) we obtain trivial solution.

\subsection{One-soliton solutions of the shifted nonlocal equations reduced from the complex reverse $xy$-space time Maccari system \eqref{CMacxyt1}-\eqref{CMacxyt2}}

When we use Type 1 with the reduction \eqref{mui1} and one-soliton solution (\ref{sol-compxyt}) of the shifted nonlocal Maccari system \eqref{CMacxyt1}-\eqref{CMacxyt2}, we obtain
\begin{align}
	k_1=\mu_1 k_1, \quad l_3=\mu_2 l_1, \quad e^{\alpha_3}=\rho_3 e^{k_1X_0+l_1Y_0+\alpha_1},
	\end{align}
giving
\begin{align}
    &a)\, \mu_1=1,\, X_0=0,\, \mu_2=1,\, Y_0=0, \label{CRxyt.a1}\\
    &b)\, \mu_1=1,\, X_0=0,\, \mu_2=-1.   \label{CRxyt.b1}
\end{align}

The case \eqref{CRxyt.a1} corresponds to the shifted nonlocal  Maccari equation \eqref{RCxyt1} with the constraints
$$l_3=l_1, \quad Y_0=0, \quad e^{\alpha_3}=\rho_3e^{\alpha_1},$$
and the condition \eqref{A4con}. Hence the solution of the shifted nonlocal equation \eqref{RCxyt1} is
\begin{equation}\label{solnmac4}
u=\frac{e^{k_1x+l_1y+\omega_1t+\alpha_1}}{1+\sigma_1\frac{2e^{(k_1+\bar{k}_1)x+(l_1+\bar{l}_1)y+(\omega_1+\bar{\omega}_1)t+\alpha_1+\bar{\alpha}_1}}{(k_1+\bar{k}_1)
(l_1+\bar{l}_1)}},  \,\,
p=\frac{\sigma_1\frac{4(k_1+\bar{k}_1)}{(l_1+\bar{l}_1)}e^{(k_1+\bar{k}_1)x+(l_1+\bar{l}_1)y+(\omega_1+\bar{\omega}_1)t+\alpha_1+\bar{\alpha}_1}}{\Big[{1
+\sigma_1\frac{2e^{(k_1+\bar{k}_1)x+(l_1+\bar{l}_1)y+(\omega_1+\bar{\omega}_1)t+\alpha_1+\bar{\alpha}_1}}{(k_1+\bar{k}_1)(l_1+\bar{l}_1)}}\Big]^2}.
\end{equation}

The case \eqref{CRxyt.b1} gives trivial solution.

\noindent Using Type 2 approach with the reduction formula \eqref{mui1} and solution (\ref{sol-compt}) we have
\begin{align}
	k_1=-\mu_1 \bar{k}_1, \quad l_3=-\mu_2 \bar{l}_1, \quad \omega_1=-\bar{\omega}_1, \quad e^{2\alpha_3}= e^{(l_1-\bar{l}_1)Y_0+2\alpha_1}.
	\end{align}
It follows that
\begin{align}
    &a)\, \mu_1=-1,\, \mu_2=1,\, Y_0=0, \label{CRxyt.a2}\\
    &b)\, \mu_1=-1,\, \mu_2=-1.   \label{CRxyt.b2}
\end{align}

The case \eqref{CRxyt.a2} gives trivial solution. On the other hand, the case \eqref{CRxyt.b2} corresponds to the shifted nonlocal  Maccari equation \eqref{RCxyt4} with $l_3=\bar{l}_1$ and the condition \eqref{A4con}. Thus, we again obtain the solution \eqref{solnmac3}.

We now use Type 1 with the reduction formula \eqref{mui2} and solution (\ref{sol-compxyt}). We obtain
\begin{align}
	k_1=\mu_1 \bar{k}_1, \quad l_3=\mu_2 \bar{l}_1, \quad \omega_1=-\bar{\omega}_1, \quad e^{\alpha_3}=\rho_3 e^{\bar{k}_1X_0+\bar{l}_1Y_0+\bar{\omega}_1T_0+\bar{\alpha}_1},
\end{align}
yielding
\begin{align}
    &a)\, \mu_1=1,\, X_0=0, \, \mu_2=1,\, Y_0=0, \label{CCxyt.a1}\\
    &b)\, \mu_1=1,\, X_0=0, \, \mu_2=-1.\label{CCxyt.b1}
\end{align}
The case \eqref{CCxyt.a1} corresponds to the shifted nonlocal  Maccari equation \eqref{CCxyt4} with
$l_3=\bar{l}_1, e^{\alpha_3}=\rho_3e^{\bar{\omega}_1T_0+\bar{\alpha}_1}$, and the condition \eqref{B4con}. Hence, we obtain the solution \eqref{solnmac3} for the equation \eqref{CCxyt4}. The case \eqref{CCxyt.b1} gives trivial solution.

Using Type 2 with the reduction forumula \eqref{mui2} and one-soliton solution (\ref{sol-compxyt}) we get
\begin{align}
	k_1=-\mu_1 k_1, \quad l_3=-\mu_2 l_1, \quad e^{2\alpha_3}=e^{(k_1-\bar{k}_1)X_0+(l_1-\bar{l}_1)Y_0+(\omega_1-\bar{\omega}_1)T_0+2\bar{\alpha}_1},
\end{align}
giving
\begin{align}
    &a)\, \mu_1=-1, \, \mu_2=1,\, Y_0=0, \label{CCxyt.a2}\\
    &b)\, \mu_1=-1, \mu_2=-1.   \label{CCxyt.b2}
\end{align}
The case \eqref{CCxyt.a2} gives trivial solution. On the other hand, the case \eqref{CCxyt.b2} corresponds to the Maccari equation \eqref{CCxyt1} with the constraints
\begin{equation}
l_3=l_1, \quad e^{2\alpha_3}=e^{(k_1-\bar{k}_1)X_0+(l_1-\bar{l}_1)Y_0+(\omega_1-\bar{\omega}_1)T_0+2\bar{\alpha}_1},
\end{equation}
and the condition \eqref{B4con}. Therefore, we obtain the solution \eqref{solnmac4} for the Maccari equation \eqref{CCxyt1}.

\section{Concluding remarks}
In this work, we studied shifted nonlocal reductions of $(2+1)$-dimensional $5$-component Maccari system. We first proved that shifted nonlocal reductions are special cases of shifted scale transformations.

We then obtained all consistent shifted nonlocal reductions of this system. By the real shifted nonlocal reduction, we derived real reverse $x$-space, $y$-space, and $xy$-space shifted nonlocal Maccari systems, which are all two-place systems. Using complex shifted nonlocal reductions on the $5$-component Maccari system we obtained the same systems derived by the real shifted nonlocal reductions in addition to complex reverse time, $x$-space time, $y$-space time, and $xy$-space time shifted nonlocal two-place Maccari systems.

We further reduced the obtained shifted nonlocal Maccari systems by applying real and complex shifted nonlocal reductions. By doing so, we derived $3$ two-place and $9$ four-place shifted nonlocal Maccari equations from the real shifted nonlocal Maccari systems. On the other hand, we obtained $4$ two-place and $12$ four-place different complex shifted nonlocal Maccari equations from the complex shifted nonlocal Maccari systems. In total, we obtained new integrable $28$ shifted nonlocal Maccari equations, consisting of $7$ two-place and $21$ four-place equations.

We also obtained one-soliton solution of the $5$-component Maccari system by Hirota method. Using this solution with the reduction formulas gave conditions on the parameters of the solutions of the reduced Maccari systems and equations. Under these conditions, we derived soliton solutions (bell-type, solitoff, V-type) of the shifted nonlocal  Maccari systems and equations.

As a future work, we plan to obtain different types of exact solutions to shifted nonlocal
reductions of $5$-component Maccari system. We want to generalize multi-place nonlocality for the $(N+1)$-component Maccari system.

\section{Acknowledgment}
This work is partially supported by the Scientific
and Technological Research Council of Turkiye (T\"{U}B\.{I}TAK).

\end{document}